\numberwithin{equation}{section}
\newtheorem{thm}{Theorem}[section]
\newtheorem{theorem}{Theorem}[section]
\newtheorem{definition}[thm]{Definition}
\newtheorem{lemma}[thm]{Lemma}
\newtheorem{remark}[thm]{Remark}
\newtheorem{corollary}[thm]{Corollary}
\newtheorem{proposition}[thm]{Proposition}
\newtheorem{asm}[thm]{Assumption}
\newcommand{\cF}{\mathcal{F}}
\newcommand{\cL}{\mathcal{L}}
\newcommand{\cM}{\mathcal{M}}
\newcommand{\cP}{\mathcal{P}}
\newcommand{\cQ}{\mathcal{Q}}
\newcommand{\cW}{\mathcal{W}}
\newcommand{\Om}{{\Omega}}
\newcommand{\om}{{\omega}}
\newcommand \E{\mathbb{E}}
\newcommand \F{\mathbb{F}}
\newcommand \Q{\mathbb{Q}}
\newcommand \R{\mathbb{R}}
\newcommand \Sb{\mathbb {S}}
\def\reff#1{{\rm(\ref{#1})}}
\newenvironment{proof}{\noindent {\bf Proof.\/}}{$\qed$\vskip 0.1in}
\def\qed{ \hfill \vrule width.2cm height.2cm depth0cm\smallskip}
\begin{document}

\title{Robust Hedging 
with Proportional Transaction Costs
\thanks{Research partly supported by the
European Research Council under the grant 228053-FiRM,
 by the ETH Foundation and by the Swiss Finance Institute.
 Authors would like to thank Lev Buhovsky, Jan Ob{\l}{\'o}j
 and Josef Teichmann  for
 insightful discussions and comments.}
}

\author{
  Yan Dolinsky
  \thanks{
  Hebrew University, Dept.\ of Statistics,
  \texttt{yan.dolinsky@mail.huji.ac.il}
  }
  \and
  H.\ Mete Soner
  \thanks{
  ETH Zurich, Dept.\ of Mathematics,
  and Swiss Finance Institute, \texttt{hmsoner@ethz.ch}}
  }
  
\date{\today}

\maketitle

\begin{abstract}
Duality  for robust hedging with
proportional transaction costs
of path dependent European options
is obtained in a
discrete time financial market with one risky asset.
Investor's portfolio consists of a dynamically traded stock and a static
position in vanilla options which can be exercised at maturity.
Trading  of both the options and the stock
are subject to proportional
transaction costs. The main theorem is
duality between hedging
and a Monge-Kantorovich type
optimization problem.
In this dual transport problem
the optimization is over all
the probability measures which
satisfy an approximate martingale condition
related to consistent price systems
in addition to
an approximate marginal constraints.
\end{abstract}
\vspace{6pt}

{\small
\noindent \emph{Keywords:}
European options, Robust hedging, Transaction costs, Weak convergence,
Consistent price systems, Optimal transport, Fundamental Theorem of 
Asset Pricing.
\vspace{4pt}

\noindent \emph{AMS 2010 Subject Classifications:}
91G10, 60G42
\vspace{4pt}

\noindent \emph{JEL Classifications:} G11, G13, D52

\section{Introduction}
\label{sec:1}

As well known super-replication
in markets with transaction costs is quite costly
\cite{ssc,ls}.
Naturally the same is
even more true for the model free case
in which one does not place any probabilistic
assumptions on the behavior of the risky asset.
However, one may reduce
the hedging cost
by including  liquid derivatives in the super-replicating
portfolio. In particular, one may use
all call options (written on the
underlying asset)
with a price that is
known to the investor initially.
This leads us to the semi--static hedging
introduced in the classical paper of Hobson
\cite{H} in markets without transaction costs.
So following \cite{H},
we assume that all call options are traded assets
and can be initially bought or sold
for a known price.
In addition to these static option positions,
the stock is also traded dynamically.
These trades, however, are subject to
transaction costs.  Each option has
its own cost and
their general structure
is outlined in the next section.

In this market, we consider the problem of
robust hedging of
a given path dependent European option.
Robust hedging
refers to super-replication
of an option
for all possible stock price processes.
This approach
has been actively researched over the past decade
since the seminal paper of Hobson \cite{H}.
In particular, the optimal portfolio is
explicitly constructed for special cases of European options in continuous time;
barrier options in \cite{BHR} and \cite{CO,CO1},
lookback options in \cite{GLT},  \cite{LT} and  \cite{H}, and
volatility options in \cite{CL}.
The main technique that is employed in these papers
is the Skorohod embedding.
For more information,
we refer the reader to the
surveys of Hobson  \cite{H1},
Ob{\l}{\'o}j \cite{O}
and to the reference therein.

Recently, an alternative approach
is developed using
the connection  to optimal
transport.  Duality
results in different types of generality
or modeling have been proved
in \cite{ABS}, \cite{BHLP}, \cite{DS} and \cite{GLT}
in frictionless markets.
In particular, \cite{DS} studies the continuous
time models, \cite{GLT} provides the connection
to stochastic optimal control and a general
solution methodology,
\cite{BHLP} proves a general duality in discrete time
and \cite{ABS} studies the question of
fundamental theorem of asset pricing
in this context.

Although much has been established,
the effect of frictions -  in particular the impact of
transaction costs - in this context is not fully
studied.
The classical probabilistic models
with transaction costs, however, is well studied.
In the classical model,
a stock price model is assumed
and  hedging is done only
through the stock and no static position
in the options is used.
Then, the dual
is given as the supremum
of  ``approximate" martingale measures
which are equivalent to the market probability measure,
see  \cite{Sc,KS} and the references therein.
In this paper, we extend this result to the robust case.
Namely, we prove that the super--replication price can be represented
as a martingale optimal transport problem.
The dual control problem is the supremum of the
expectation of the option, over all approximate martingale measures
which also satisfy an approximate marginal condition
at maturity.
This result is stated in Theorem \ref{thm2.1} below
and the definition of an approximate martingale
is given in Definition \ref{d.app}.
Indeed, approximate martingales
are very closely related
to consistent price systems which play
a central role in the duality theory for
markets with proportional transaction costs.

Recently
Acciaio, Beiglb\"ock and Schachermayer \cite{ABS}
proved the fundamental theorem
of asset pricing (FTAP) in discrete time markets
without transaction costs in the robust setting.
Also Bouchard and Nutz \cite{BN} studies
FTAP again in discrete time but in
the quasi-sure setting.
Our main duality result  has
implications towards these results as well.
These corollaries are discussed
in the subsection \ref{ss.ftap}.

As in our previous paper \cite{DS} on robust hedging,
our proof relies on discretization of the problem.
We first show that
the original robust hedging problem
can be obtained as a limit of hedging
problems
that are defined on {\em finite spaces}.
We exploit  the finiteness of
these approximate problems
and directly  apply an elementary
Kuhn--Tucker duality theory.
We then prove that any sequence
of probability measures that
are asymptotical
maximizers of these finite problems
is  tight.
The final step is then
to directly use
weak convergence and pass to the limit.

The paper is organized as follows.
Main results are formulated in the next section
and proved in Section 3.
The final section is devoted to
the proof of an auxiliary result that is used in the
proof of the main results. This auxiliary result deals with
super--replication under constraints and
maybe of independent interest.

\section{Preliminaries and main results}
\label{sec:2}
\setcounter{equation}{0}

The  financial market consists of
 a savings account $B$ and
 a risky asset $S$
 and the trading is restricted to
 finitely many time points.
 Hence, the stock price
 process is
 $S_k$, $k=0,1,...,N$,
 where $N<\infty$ is
the maturity date or the total number of allowed
trades.
 By discounting,
 we normalize $B\equiv 1$.
 Furthermore, we normalize the
 initial stock price  $s:=S_0>0$
 to one as well.
 Then, the set $\Omega$ of
 all possible price processes
 is simply all vectors
 $(\om_0,\ldots,\om_N)\in \mathbb{R}_{+}^{N+1}$
which satisfy $\om_0=1$ and $\om_1,\ldots,\om_N\geq0$.
Then, any element of $\Om$ is a possible
stock price process.
So we let $\Sb$ be the canonical
process given by $\Sb_k(\om):= \om_k$
for $k=0,\ldots,N$.
 Let us emphasize
 we make no assumptions on our financial market.
 In particular, we do not assume any probabilistic structure.

\subsection{An assumption on the European claim}
We consider general path dependent options.
Hence, the pay-off is $X=G(\Sb)$ with any function
$G:\Omega\rightarrow \mathbb{R}$.
Our approach to this problem,
requires us to make the
following regularity
and growth assumption.
Let
 $\|\om\|:=\max_{0\leq k\leq n} |\om_k|$
for $\om \in \Om$.
We assume the following.
\begin{asm}
\label{a.main}
$G$ is upper semi-continuous and bounded
by a quadratic function, i.e.,
there exist a constant $L>0$ such that
$$
|G(\om)| \le L [ 1 + \|\om\|^2],\quad
\forall\ \om \in \Om.
$$
\end{asm}

The above assumption
is quite general and allows for most of the
standard claims such as Asian, lookback, volatility
and Barrier options.
The reason for  the quadratic growth choice
is the volatility options.  More generally, one may
consider different growth conditions as well.
However, in this paper, we choose not
to include this extension
 to avoid more technicalities.

\subsection{Semi static hedging with transaction costs}
Let $\kappa>0$ be a given constant. Consider a model in which
every purchase or sale of the risky asset
at any time is subject to a proportional
transaction cost of rate $\kappa$.
We assume that $\kappa<1/4$.

Then, a {\em{portfolio strategy}} is a pair $\pi:=(f,\gamma)$
where $f:\mathbb{R}_{+}\rightarrow\mathbb{R}$ is a measurable function
and
$$
\gamma:\{0,1,...,N-1\}\times\Omega\rightarrow\mathbb R
$$
is a progressively measurable map, i.e.
$\gamma(i,\om)=\gamma(i,\tilde \om)$ if $\om_j=\tilde \om_j$ for all $j\leq i$.
The function $f$ represents the European option with payoff $f(\Sb_N)$
that is bought at time zero for the price of $\cP(f)$ and
 $\gamma(k,\Sb)$
represents the number of stocks that the investor invests
at time $k$ given that the stock prices
up to time $k$ are  $\Sb_0,\Sb_1,...,\Sb_{k}$.
Then, the
portfolio value at the maturity date is given by
\begin{equation}\label{2.1+}
Y^{\pi}_N(\Sb):=
f(\Sb_N)+\sum_{i=0}^{N-1}
\gamma(i,\Sb)(\Sb_{i+1}-\Sb_i)
-\kappa \sum_{i=0}^{N-1}\Sb_i\left|\gamma(i,\Sb)-\gamma(i-1,\Sb)\right|
\end{equation}
where we set $\gamma(-1,\cdot)\equiv 0$.
The initial cost of any portfolio $(f,\gamma)$
is the price of the option
$\cP(f)$.  Properties of this price operator $\cP$
is given in the next subsection.  

\begin{definition}
\label{d.perfect}
{\rm{A portfolio $\pi$ is called}} perfect {\rm{(or}}
perfectly dominating{\rm{)
if it  super-replicates the option, i.e.,
\begin{equation*}
Y^\pi_N(\Sb)\geq  G(\Sb), \ \ \forall \
\Sb\in\Omega.
\end{equation*}
The minimal}} super-replication cost
{\rm{is given by}}
\begin{equation}
\label{2.1++}
V(G)=\inf\left\{ \cP(f) \ |\
 \pi:=(f,\gamma)\ \mbox{is} \ \mbox{a} \
 \mbox{perfect} \ \mbox{portfolio}\right\}.
\end{equation}
\end{definition}

\subsection{European options and their prices}

We postulate
a general pricing operator  $\cP(f)$ for 
the initial price of the option $f(\Sb_N)$.
We assume that it has the 
following properties.
\begin{asm}
\label{a.kappa}
There exists $p>2$ such that for the power function
$x^p$,
$\cP(x^p)<\infty$. Consider the vector space
$$
\mathcal{H}:=\{f:\mathbb{R}_{+}\rightarrow\mathbb{R}, {\mbox{Borel mbl.}}\ |\ \exists \ C>0 \
{\mbox{such that}} \ |f(x)|\leq C(1+x^p),\  \forall x\in \R_+\}.
$$
We assume that
$\cP:\mathcal{H}\rightarrow\mathbb{R}$ is a convex function and for every constant
$a \in \R$
\begin{equation}
 \label{a.cost}
\cP(a) = a .
\end{equation}
We also assume that $\cP$ is
positively homogeneous of degree one, i.e.
\begin{equation}
 \label{a.additive}
\cP(\lambda f)=\lambda \cP(f), \ \ f\in\mathcal{H}, \ \ \lambda>0.
\end{equation}
Furthermore,
for every sequence ${\{f_n\}}_{n=1}^\infty\subset \mathcal{H}$
converging pointwise to $f\in \mathcal{H}$
 \begin{equation}
 \label{a.cont}
\cP(f) \geq\lim\sup_{n \to \infty} \cP(f_n).
\end{equation}
\end{asm}
In (\ref{2.1++}) we assume that the function $f$ belongs
to $\mathcal{H}$. Namely, $\cP(f)\equiv\infty$ for $f\notin\mathcal{H}$.

We conclude this section
with an elementary result.

\begin{lemma}
\label{l.add} The minimal super-replication
cost $V$ is sub--additive and
positively homogeneous of degree one, i.e.,
$$V(\lambda G)=\lambda V(G), \ \ \lambda>0,$$
and
$$
V(G+H) \le V(G) + V(H).
$$
Furthermore, if $G\geq 0$  and $V(G)<0$,
then $V(G)=-\infty$.
\end{lemma}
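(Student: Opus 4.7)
The plan is to prove the three assertions in order, mirroring the structure of the claim, using only the definition of a perfect portfolio, elementary properties of \eqref{2.1+}, and Assumption \ref{a.kappa}.

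For positive homogeneity, I would observe that if $\pi=(f,\gamma)$ perfectly dominates $G$, then for $\la>0$ the scaled strategy $\la\pi:=(\la f,\la\gamma)$ satisfies $Y^{\la\pi}_N(\Sb)=\la Y^{\pi}_N(\Sb)\ge \la G(\Sb)$, by direct inspection of \eqref{2.1+} (each term, including the absolute value in the transaction cost term, scales linearly for $\la>0$). Its cost is $\cP(\la f)=\la\cP(f)$ by \eqref{a.additive}. Taking infima yields $V(\la G)\le\la V(G)$; the reverse inequality follows by applying the same argument to $\la G$ with scale $1/\la$.

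For sub--additivity, take $\pi_1=(f_1,\gamma_1)$ perfect for $G$ and $\pi_2=(f_2,\gamma_2)$ perfect for $H$, and form the sum $\pi:=(f_1+f_2,\gamma_1+\gamma_2)$. The linear pieces add, and the triangle inequality
$$
\bigl|(\gam_1+\gam_2)(i,\Sb)-(\gam_1+\gam_2)(i-1,\Sb)\bigr|\le\bigl|\gam_1(i,\Sb)-\gam_1(i-1,\Sb)\bigr|+\bigl|\gam_2(i,\Sb)-\gam_2(i-1,\Sb)\bigr|
$$
gives $Y^{\pi}_N(\Sb)\ge Y^{\pi_1}_N(\Sb)+Y^{\pi_2}_N(\Sb)\ge G(\Sb)+H(\Sb)$. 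The cost of $\pi$ is $\cP(f_1+f_2)$, and convexity plus positive homogeneity of $\cP$ give the sub--additivity $\cP(f_1+f_2)=2\cP((f_1+f_2)/2)\le \cP(f_1)+\cP(f_2)$. Taking infima over $\pi_1,\pi_2$ yields $V(G+H)\le V(G)+V(H)$.

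For the last statement, the idea is to combine positive homogeneity with the obvious monotonicity of $V$: whenever $G_1\ge G_2$ pointwise, any perfect portfolio for $G_1$ is also perfect for $G_2$, so $V(G_2)\le V(G_1)$. Now suppose $G\ge 0$ and $V(G)=c$ with $-\infty<c<0$. For any $\la>1$ monotonicity gives $V(\la G)\ge V(G)=c$ (since $\la G\ge G$), while positive homogeneity gives $V(\la G)=\la c<c$, a contradiction. Hence $V(G)\notin(-\infty,0)$, so $V(G)<0$ forces $V(G)=-\infty$.

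No step is really hard; the only mildly delicate point is the $-\infty$ claim, and even there the argument is just the interaction of monotonicity with positive homogeneity for negative values. One should be a little careful with the cost term in \eqref{2.1+} when verifying that $Y^\pi_N$ truly decomposes into a sum for the sub--additivity step, since the absolute value is only sub--additive, not additive; but this inequality goes the right way for super--replication.
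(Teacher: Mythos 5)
Your proof is correct and takes essentially the same route as the paper's: the first two parts rest on scaling and summing perfect portfolios together with the sub--additivity of $\cP$ (which, as you note, follows from convexity plus positive homogeneity of $\cP$), and the third on scaling a perfect portfolio with negative cost using $G\ge 0$. The only cosmetic difference is in the last claim: the paper directly observes that $(\lambda f,\lambda\gamma)$ remains perfect for $G$ when $G\ge 0$ and sends $\lambda\to\infty$ to get $V(G)\le\lambda\cP(f)\to-\infty$, whereas you repackage the same scaling idea as a contradiction by combining monotonicity of $V$ with its positive homogeneity.
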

\begin{proof}
From the convexity and the positive homogeneity of $\cP$,
it follows that $\cP$ is sub additive, i.e.
$$
\cP(f+g) \le \cP(f) + \cP(g).
$$
Thus, the first two properties follow immediately from (\ref{2.1++}).
Finally, let $G\geq 0$ be a non--negative claim and assume that $V(G)<0$.
Then, there exists a perfect portfolio $(f,\gamma)$
with $\cP(f)<0$. Clearly for any $\lambda>1$, $(\lambda f,\lambda \gamma)$
is also a perfect portfolio. Thus
from (\ref{a.additive}) we get
$$
V(G)\leq \lim_{\lambda\rightarrow\infty}\cP(\lambda f)=-\infty,
$$
as required.
\end{proof}

\subsection{The main result}
To state the main result of the paper,
we need to introduce the probabilistic structure
as well.
Recall the space $\Om$ and
the canonical process $\Sb$.
Let $\F=(\cF_k)_{k=1}^{N}$
be the canonical filtration
generated by the process $\Sb$, i.e,
$\cF_k= \sigma(\Sb_1,\ldots,\Sb_k)$.
\begin{definition}
\label{d.app}
{\rm{
A probability measure $\mathbb Q$  on
$(\Om,\F)$ is called a }}
$\kappa$-approximate martingale law
{\rm{if $\Sb_0=1$
$\mathbb P$-a.s.
and if the pair $(\mathbb Q, \tilde \Sb)$
with
$$
\tilde \Sb_k:=
\E_{\mathbb Q}\left[\  \Sb_N\ |\ \cF_k\right],
$$
is a consistent price system in the sense
of \cite{KS,Sc}, i.e.,
for any  $k<N$
\begin{equation}
\label{2.2-}
(1-\kappa)\Sb_k\leq \tilde \Sb_k
\leq (1+\kappa)\Sb_k \ \ \mathbb{Q}\mbox{-a.s.}
\end{equation}
We denote by
$\mathcal{M}_{\kappa,\cP}$ the set of
all $\kappa$--approximate martingale laws
$\mathbb Q$, such that

\begin{equation}
\label{a.mu}
\E_{\Q}\left[\  f(\Sb_N)\ \right] \le \cP(f),
\qquad
\forall f \in \mathcal{H}.
\end{equation}
}}
\end{definition}

The following theorem is the main result of the paper.
We use the standard convention that the
supremum over an empty set is
equal to
minus infinity.
\begin{theorem}
\label{thm2.1}
Suppose $G$ satisfies
the Assumption \ref{a.main}
and $\cP$ satisfies
the Assumption \ref{a.kappa}. Then,
$$
V(G)=\sup_{\mathbb Q\in
\mathcal{M}_{\kappa,\cP} }\mathbb{E}_{\mathbb Q}
\left[G(\Sb)\right].
$$
In particular, when
the set of measures
$\mathcal{M}_{\kappa,\cP}$ is empty,
$V(G)=-\infty$ for every $G$ satisfying the
Assumption \ref{a.main}
\end{theorem}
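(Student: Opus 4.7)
The plan is to establish both inequalities. The direction $V(G) \ge \sup_{\Q \in \cM_{\kappa,\cP}} \E_{\Q}[G(\Sb)]$ is the easier one and I would check it first. Fix any $\Q \in \cM_{\kappa,\cP}$ and any perfect portfolio $\pi = (f,\gamma)$, and set $\Delta_k := \Sb_k - \tilde{\Sb}_k$, so that $\Delta_N = 0$ and $|\Delta_k| \le \kappa \Sb_k$ for all $k<N$ by \eqref{2.2-} (the case $k=0$ using $\Sb_0 = 1$). Abel summation with the convention $\gamma(-1,\cdot)\equiv 0$ yields
\[
\sum_{i=0}^{N-1}\gamma(i,\Sb)(\Sb_{i+1}-\Sb_i) = \sum_{i=0}^{N-1}\gamma(i,\Sb)(\tilde{\Sb}_{i+1}-\tilde{\Sb}_i) - \sum_{j=0}^{N-1}(\gamma(j,\Sb)-\gamma(j-1,\Sb))\,\Delta_j,
\]
the second sum being pathwise bounded in absolute value by $\kappa\sum_j\Sb_j|\gamma(j,\Sb)-\gamma(j-1,\Sb)|$. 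Since $\tilde{\Sb}$ is a $\Q$-martingale, taking $\Q$-expectation (after a truncation argument using $\cP(x^p)<\infty$ to justify integrability of the martingale transform) gives $\E_{\Q}[Y^\pi_N]\le\E_{\Q}[f(\Sb_N)]\le\cP(f)$ by \eqref{a.mu}. Since $G\le Y^\pi_N$ pathwise and $\pi$ is arbitrary, $\E_{\Q}[G(\Sb)]\le V(G)$.

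The reverse inequality is the substance of the theorem and follows the discretize--dualize--pass-to-the-limit scheme sketched in the introduction. First I would truncate and approximate $\Omega = \R_{+}^{N+1}$ by a nested family of finite grids $\Omega_n$, and replace the operator $\cP$ by a finite family of pricing constraints corresponding to densely chosen payoffs $f_1,\ldots,f_n \in \cH$; this produces finite-dimensional super-replication problems with values $V_n$ converging to $V(G)$, the quadratic growth bound in Assumption \ref{a.main} controlling the truncation error. Next, because each $V_n$ is a finite linear program over the variables $(f_j,\gamma,\text{transaction-cost slack})$, Kuhn--Tucker / finite LP duality identifies $V_n = \sup_{\Q\in\cM_n}\E_{\Q}[G]$, where $\cM_n$ is a finite-dimensional polytope of measures satisfying the discretized analogues of \eqref{2.2-} and \eqref{a.mu}. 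The entire point of first reducing to finite dimensions is to sidestep infinite-dimensional Hahn--Banach or closure arguments.

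Finally I would choose near-maximizers $\Q_n\in\cM_n$ with $\E_{\Q_n}[G]\ge V_n - 1/n$, establish tightness, and extract a weakly convergent subsequence $\Q_n \to \Q^*$. Tightness comes from $\E_{\Q_n}[\Sb_N^p] \le \cP(x^p)$ together with Jensen's inequality for the $\Q_n$-martingale $\tilde{\Sb}$ and the bound $\Sb_k \le (1-\kappa)^{-1}\tilde{\Sb}_k$ from \eqref{2.2-}, yielding a uniform $L^p$ bound on $\max_k \Sb_k$. The upper semicontinuity of $G$ combined with $p>2$ uniform integrability then gives $\E_{\Q^*}[G]\ge\limsup\E_{\Q_n}[G] = V(G)$, and one checks $\Q^*\in\cM_{\kappa,\cP}$ by passing \eqref{2.2-} under weak limits (testing against bounded-continuous $\cF_k$-measurable functions) and transferring \eqref{a.mu} via the upper-semicontinuity axiom \eqref{a.cont} of $\cP$. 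I expect the main obstacle to be exactly this limit identification, since the approximate-martingale condition \eqref{2.2-} involves the unbounded conditional expectation $\E_{\Q}[\Sb_N\mid\cF_k]$ and only the uniform $L^p$ control with $p>2$ supplied by Assumption \ref{a.kappa} makes weak convergence compatible with it. A secondary technical difficulty is the discretization step itself, which appears to rely on the ``super-replication under constraints'' auxiliary result deferred to the final section of the paper.
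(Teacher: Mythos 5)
Your outline correctly identifies the two inequalities and the overall discretize--dualize--pass-to-the-limit strategy, and your computation for the easy direction (Abel-summing on $\Delta_k = \Sb_k - \tilde{\Sb}_k$ and using the martingale property of $\tilde{\Sb}$) is a valid repackaging of the paper's argument. However, the hard direction as you describe it has two genuine gaps.

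First, before discretizing, one must reduce to $G$ \emph{bounded and uniformly continuous}, not merely upper semi-continuous with quadratic growth. The discretization step requires lifting a portfolio $\gamma$ defined on the grid $\Omega_n$ to one defined on all of $\Omega$, and the estimate on the resulting hedging error (Proposition~\ref{lem3.2}) explicitly uses a modulus of continuity $m(h)$; for a general u.s.c.\ claim there is none. The paper handles this in the Appendix via Lemma~\ref{l.vienna}, which uses the pathwise Doob inequality of \cite{ABPST} to show the tail options $\alpha_M = \|\Sb\|^2\chi_{\{\|\Sb\|\ge M\}}$ have vanishing super-replication cost — this is the actual mechanism controlling truncation, not merely ``the quadratic growth bound.'' You do not mention this reduction, and without it your Step~1 does not close.

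Second, and more structurally, the finite LP duality does \emph{not} produce a set $\cM_n$ of measures satisfying a hard discretized analogue of~\eqref{2.2-}. Under proportional transaction costs, the friction term dualizes to a \emph{penalty} in the objective, not a constraint. Specifically, to make the LP duality tractable one must first bound the position changes $|\gamma(i)-\gamma(i-1)|\le M$; the dual of this constrained problem (Theorem~\ref{thm4.1}) then runs over \emph{all} finitely supported measures $\Q$ with the penalty $M\sum_k(|\E_\Q[\Sb_N|\cF_k]-\Sb_k|-\kappa\Sb_k)^+$ subtracted from the objective. Similarly, the option-pricing constraint is not handled by a finite subfamily of payoffs but by a min-max exchange (Theorem~2 of \cite{BHLP}) over the compact set $\cW_n$, yielding the relaxed constraint~\eqref{3.10+} with $O(1/n)$ slack. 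Consequently the near-optimal measures $\Q_n$ are \emph{not} approximate martingale laws, and your tightness argument — $\Sb_k \le (1-\kappa)^{-1}\tilde{\Sb}_k$ plus Jensen — does not apply to them. Tightness must instead be extracted from the smallness of the penalty term (equation~\eqref{3.14}), via a Cauchy--Schwarz/Markov estimate on $(X-Y)^+ = ((1-\kappa)\Sb_k - \tilde{\Sb}_k)^+$, and the approximate-martingale property is only recovered \emph{in the limit} by taking $M=\sqrt{n}\to\infty$. This penalty-versus-constraint distinction is the crux of the argument and is where your proposal, as written, would fail.
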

\begin{proof}
In view of (\ref{2.1+}) and the convention
$\gamma(-1,\cdot)\equiv 0$,
for  any portfolio $\pi=(f,\gamma)$,
\begin{eqnarray*}
Y^{\pi}_N(\Sb)&=&f(\Sb_N)
+\sum_{i=0}^{N-1}\sum_{j=0}^i(\gamma(j,\Sb)-
\gamma(j-1,\Sb))(\Sb_{i+1}-\Sb_i)
-\kappa \sum_{i=0}^{N-1}\Sb_i
\left|\gamma(i,\Sb)-\gamma(i-1,\Sb)\right|\\
&=&f(\Sb_N)+
\sum_{j=0}^{N-1}
(\gamma(j,\Sb)-\gamma(j-1,\Sb))
\sum_{i=j}^{N-1}(\Sb_{i+1}-\Sb_i)
-\kappa \sum_{i=0}^{N-1}\Sb_i
\left|\gamma(i,\Sb)-\gamma(i-1,\Sb)\right|\\
&=&f(\Sb_N)
+\sum_{j=0}^{N-1}(\gamma(j,\Sb)-\gamma(j-1,\Sb))(\Sb_N-\Sb_j)
-\kappa \sum_{i=0}^{N-1}\Sb_i\left|\gamma(i,\Sb)-\gamma(i-1,\Sb)\right|.
\end{eqnarray*}
Suppose that
$\mathcal{M}_{\kappa,\cP}$ is non-empty.
Let $\mathbb{Q}\in\mathcal{M}_{\kappa,\cP}$ and $\pi=(f,\gamma)$
be a perfect portfolio.  Then,  (\ref{2.2-}) and \reff{a.mu}
yield that
\begin{eqnarray*}
\E_\Q \left[G(\Sb)\right] &\leq&
\E_\Q\left[ Y^{\pi}_N(\Sb)\right]\\
&\le&\cP(f)+
\sum_{i=0}^{N-1}\mathbb{E}_\Q
\left[\left(\gamma(i,\Sb)-\gamma(i-1,\Sb)\right)\
\left(\tilde \Sb_i-\Sb_i\right)\right]\\
&&\hspace{27pt}
-\kappa\sum_{i=0}^{N-1}\mathbb{E}_\Q
\left[\Sb_i \ |\gamma(i,\Sb)-\gamma(i-1,\Sb)|\right]\\
&\leq &\cP(f).
\end{eqnarray*}
So we have proved that
\begin{equation}
\label{e.easy}
\sup_{\mathbb Q\in \mathcal{M}_{\kappa,\cP} }
\E_\Q \left[G(\Sb)\right]  \leq V(G).
\end{equation}

Hence, to complete the proof
of the theorem it suffices to show that
\begin{equation}
\label{2.2}
V(G)\leq\sup_{\mathbb Q\in \mathcal{M}_{\kappa,\cP} }
\mathbb{E}_\Q\left[G(\Sb)\right].
\end{equation}
The proof of the second inequality is
given in the next section.
\end{proof}

\begin{remark}\label{rem2.1}
{\rm{
Consider the following
more general problem.
Assume that for $0<k \le N$ and a set of times $0<i_1<i_2<...<i_k=N$,
one can initially buy vanilla options with
a payoff $f_{i_j}(\Sb_{i_j})$ with maturity date $i_j$
for the price $\cP_{i_j}(f_{i_j})$,
where $\cP_1,...,\cP_k$ satisfy similar assumptions
to Assumption \ref{a.kappa}.
Then, by using the same approach in a recursive manner
we may extend
Theorem \ref{thm2.1} to prove that
the super--replication cost in this context is equal to
$$
\sup_{\mathbb Q\in \mathcal{M}_{\kappa,\cP_1,...,\cP_k} }
\mathbb{E}_\Q\left[G(\Sb)\right]
$$
where $\mathcal{M}_{\kappa,\cP_1,...,\cP_k}$ is
the set of all
$\kappa$-approximate probability laws $\mathbb Q$
and such that for any time $j=1,...,k$ and $f\in\mathcal{H}$ we have
$$\mathbb{E}_{\mathbb Q}f(\Sb_j)\leq\cP_j(f).$$
Furthermore, if the set $\mathcal{M}_{\kappa,\cP_1,...,\cP_k}=\emptyset$
is empty then $V(G)\equiv -\infty$.
For simplicity, in this paper we deal only with the case $k=1$.}}
\end{remark}

\subsection{Fundamental Theorem of Asset Pricing}
\label{ss.ftap}

Theorem \ref{thm2.1} also implies results that can be
seen as the Fundamental Theorem of Asset Pricing
(FTAP) for this market.
Indeed, when the set $\mathcal{M}_{\kappa,\cP}$ of measures
is empty, by Theorem \ref{thm2.1} we conclude that the
minimal super-replication cost of any $G$ (satisfying
the Assumption \ref{a.main}) is equal to minus infinity.  This
is a clear indication of arbitrage.  However, to make a
precise statement, we need to define
the notion of arbitrage.
Since we do not assume a probabilistic structure,
there are at least two possible approaches.
Indeed, in frictionless markets
FTAP is proved under different assumptions
and definitions in \cite{ABS} and in \cite{BN}.
Our result essentially implies FTAP under both definitions
under the Assumption \ref{a.kappa}.

\begin{definition}
\label{d.ftap} {\rm{We say that the model admits}}
\begin{itemize}
\item
no model-independent arbitrage {\rm{ (NA$_{mi}$),
if for every $G\ge 0$ satisfying the Assumption
\ref{a.main}, we have $V(G) \ge 0$.}}
\item
no local arbitrage
{\rm{ (NA$_{local}$), if for every continuous, bounded $G\ge 0, G \not \equiv 0$,
 we have $V(G)> 0$.}}
\end{itemize}
\end{definition}

In the above definition, NA$_{mi}$ is similar to the notion
used in \cite{ABS}.
Also a closely related definition is given by Cox \& Ob{\l}{\'o}j \cite{CO}.
On the other hand NA$_{local}$ is analogous to the one used in \cite{BN}.
One may also consider other versions of NA$_{local}$
by requiring different notions of regularity of $G$.
In the probabilistic setting, this is related to
the choice of the polar sets.  There
one requires that the set $\{G>0\}$
to be non-polar (c.f. \cite{BN}).
Clearly, other choices would result a similar
but a different equivalent condition as proved below.
We do not elaborate on
different choices.

\begin{corollary}
\label{c.ftap}
Suppose $\cP$ satisfies
the Assumption \ref{a.kappa}.
\begin{enumerate}
\item
There is no model-independent arbitrage
if and only if $\mathcal{M}_{\kappa,\cP}$
is non-empty. In particular,
NA$_{mi}$ holds
if and only if there is one $G$ satisfying
the Assumption \ref{a.main} with $V(G)>-\infty$.
\item
There is no local arbitrage
if and only if for every open
subset $O \subset \Omega$
there is $\Q\in \mathcal{M}_{\kappa,\cP}$
with $\Q(O)>0$.
\end{enumerate}
\end{corollary}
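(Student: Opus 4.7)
The plan is to read both assertions directly off Theorem \ref{thm2.1}, which identifies $V(G)=\sup_{\Q\in\mathcal{M}_{\kappa,\cP}}\E_\Q[G(\Sb)]$ under the convention that the supremum over the empty set is $-\infty$. The only nontrivial technical step is a standard bump--function construction in part (2); everything else is accounting around this duality.

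For part (1), if $\mathcal{M}_{\kappa,\cP}=\emptyset$, then Theorem \ref{thm2.1} forces $V(G)=-\infty$ for every admissible $G$; in particular $V(0)=-\infty<0$, so NA$_{mi}$ fails. Conversely, if $\mathcal{M}_{\kappa,\cP}\neq\emptyset$, then fixing any $\Q\in\mathcal{M}_{\kappa,\cP}$ yields $V(G)\geq\E_\Q[G(\Sb)]\geq 0$ for every non--negative $G$ satisfying Assumption \ref{a.main}, so NA$_{mi}$ holds. The \emph{in particular} clause is then immediate: NA$_{mi}$ already gives $V(0)\geq 0>-\infty$ for $G\equiv 0$; conversely, if a single admissible $G$ satisfies $V(G)>-\infty$, then Theorem \ref{thm2.1} rules out $\mathcal{M}_{\kappa,\cP}=\emptyset$, and by the previous step NA$_{mi}$ holds.

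For part (2), the $(\Leftarrow)$ direction is a direct translation: if $G\geq 0$ is continuous, bounded and not identically zero, then $O:=\{G>0\}$ is an open, non--empty subset of $\Om$, so the hypothesis supplies $\Q\in\mathcal{M}_{\kappa,\cP}$ with $\Q(O)>0$; hence $\E_\Q[G(\Sb)]>0$, and Theorem \ref{thm2.1} gives $V(G)>0$. For $(\Rightarrow)$, given a non--empty open $O\subset\Om$, pick $\om^\ast\in O$ and $r>0$ such that $\{\om\in\Om:\|\om-\om^\ast\|<r\}\subset O$ (possible since $\Om$ inherits the Euclidean subspace topology from $\mathbb{R}^{N+1}$), and set $G(\om):=\max\{0,\,r-\|\om-\om^\ast\|\}$. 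This $G$ is continuous, bounded, non--negative, not identically zero, and $\{G>0\}\subset O$; by NA$_{local}$ we then have $V(G)>0$, so Theorem \ref{thm2.1} supplies $\Q\in\mathcal{M}_{\kappa,\cP}$ with $\E_\Q[G(\Sb)]>0$, forcing $\Q(O)>0$. The one subtlety worth flagging is that it is the \emph{strict} positivity $V(G)>0$ built into NA$_{local}$ that allows one to extract an individual witnessing $\Q$ from the supremum in the dual formula; without this strictness the argument would collapse to a mere non--emptiness statement and not deliver mass on the given open set.
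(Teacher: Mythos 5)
Your proof is correct and follows essentially the same route as the paper: part (1) is read directly off the duality (and the paper likewise dismisses it as immediate), and part (2) uses the identical bump-function device, the only cosmetic difference being that you take a truncated-cone bump $\max\{0,\,r-\|\om-\om^\ast\|\}$ centered at a point of $O$ whereas the paper uses the capped distance-to-complement function $\min\{1,\operatorname{dist}(\om,\Om\setminus O)\}$; both are bounded, continuous, nonnegative, vanish off $O$, and are not identically zero, so both serve the same purpose with Theorem~\ref{thm2.1}.
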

\begin{proof}
First statement follows immediately from Theorem \ref{thm2.1}.
So we only prove the second one.
First, assume that NA$_{local}$ holds.  Let
$O \subset \Omega$ be an arbitrary open set.
Set,
$$
G_O(\omega):= \min\{ 1, distance(\omega,\Omega\setminus O)\}.
$$
Since $G_O$ is bounded and continuous,
by NA$_{local}$, $V(G_O) >0$.   Since $0\le G_O \le 1$
and $G_O=0$ outside of $O$,
by Theorem \ref{thm2.1},
$$
0< V(G_O) =
\sup_{\mathbb Q\in
\mathcal{M}_{\kappa,\cP} }\mathbb{E}_{\mathbb Q}
\left[G_O(\Sb)\right]
 \le
\sup_{\mathbb Q\in
\mathcal{M}_{\kappa,\cP} } \ \Q(O).
$$
Hence, there must exists a measure $\Q \in \mathcal{M}_{\kappa,\cP} $
with $\Q(O)>0$.

To prove the opposite implication,
consider a continuous,  bounded option
$G\ge 0, G \not \equiv 0$.
Set
$$
O_G:= \{ \omega \in \Omega\ :\
G(\omega)>0\}.
$$
By the continuity of $G$ , $O_G$ is
a non-empty, open set.
By hypothesis, there exists
 $\Q_G \in \mathcal{M}_{\kappa,\cP} $
with $\Q_G(O_G)>0$.
We estimate using  Theorem \ref{thm2.1}
to arrive at
$$
V(G) = \sup_{\mathbb Q\in
\mathcal{M}_{\kappa,\cP} }\mathbb{E}_{\mathbb Q}
\left[G(\Sb)\right] \ge
\mathbb{E}_{\mathbb Q_G}
\left[G(\Sb)\right] >0.
$$

\end{proof}

\section{Proof of the main result}
\label{sec:3}
\setcounter{equation}{0}
In this section, we prove
(\ref{2.2}).

\subsection{Reduction to bounded uniformly continuous claims}
We first use the elegant path-wise approach of \cite{ABPST}
to martingale inequalities to show that
the super-replication cost of certain options
are asymptotically small.
Indeed, for $M>0$ consider the option,
$$
\alpha_M(\Sb) := \| \Sb\|^2 \ \chi_{\{
\|\Sb\| \geq M\}}.
$$
Let $\Sb^*$ be the running maximum, i.e.,
$$
\Sb^*_k:= \max_{0\leq i\leq k} \Sb_i.
$$
Since $\Sb_k >0$ for each $k$, $\| \Sb\| = \Sb^*_N$.
\begin{lemma}
\label{l.vienna}
$$
\lim_{M \to \infty}
\sup_{\mathbb Q\in \mathcal{M}_{\kappa,\cP} }\mathbb{E}_{\mathbb Q}
\left[\alpha_M(\Sb)\right]\leq\lim_{M \to \infty}
V(\alpha_M) \leq 0.
$$
\end{lemma}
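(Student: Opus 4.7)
The first inequality is immediate from \eqref{e.easy}: $\al_M$ is upper semi-continuous and bounded by $\|\om\|^2$, so it satisfies Assumption \ref{a.main}, and the easy direction of Theorem \ref{thm2.1} gives $\sup_{\Q\in\cM_{\ka,\cP}}\E_\Q[\al_M(\Sb)]\le V(\al_M)$. The real content is the bound $\lim_{M\to\infty} V(\al_M)\le 0$. I would exploit the stronger integrability allowed by Assumption \ref{a.kappa}: fix $p>2$ with $\cP(x^p)<\infty$, and observe that $x^p/M^{p-2}\ge x^2$ whenever $x\ge M$, giving the pointwise bound $\al_M(\Sb)\le (\Sb^*_N)^p/M^{p-2}$. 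Since a perfect portfolio for a larger payoff is perfect for a smaller one, $V$ is monotone; combined with the positive homogeneity from Lemma \ref{l.add},
$$
V(\al_M)\le V\bigl((\Sb^*_N)^p\bigr)/M^{p-2},
$$
so it suffices to show $V\bigl((\Sb^*_N)^p\bigr)<\infty$.

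For this, I would prove a pathwise discrete version of Doob's $L^p$ maximal inequality, in the spirit of \cite{ABPST}. The starting point is the elementary estimate
$$
\la\,\chi_{\{\Sb^*_N\ge\la\}}\le \Sb_N\,\chi_{\{\Sb^*_N\ge\la\}}-\sum_{k=0}^{N-1}\chi_{\{\Sb^*_k\ge\la\}}(\Sb_{k+1}-\Sb_k),\qquad \la>0,
$$
obtained by splitting at the first passage time of $\Sb$ above $\la$. Multiplying by $p\la^{p-2}$ and integrating in $\la$ over $(0,\infty)$, then applying Young's inequality $\Sb_N(\Sb^*_N)^{p-1}\le \Sb_N^p/(p\del^p)+\del^q(\Sb^*_N)^p/q$ with $q=p/(p-1)$ and a small $\del\in(0,1)$, one arrives at
$$
(\Sb^*_N)^p\le C_p\,\Sb_N^p+\sum_{k=0}^{N-1}\gam(k,\Sb)(\Sb_{k+1}-\Sb_k),\qquad \gam(k,\Sb):=-c_p(\Sb^*_k)^{p-1},
$$
for explicit positive constants $C_p,c_p$ depending only on $p$ and $\del$. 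Taking the European option $f(x):=C_p\,x^p\in\cH$, whose price $\cP(f)=C_p\,\cP(x^p)$ is finite by Assumption \ref{a.kappa}, provides a semi-static super-replication of $(\Sb^*_N)^p$ in the frictionless sense.

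It remains to absorb the transaction cost. Since $k\mapsto\gam(k,\Sb)$ is non-increasing (as $\Sb^*$ is) and its jumps occur only at times $k$ with $\Sb_k=\Sb^*_k>\Sb^*_{k-1}$, the transaction-cost term collapses to the Riemann--Stieltjes sum $\ka c_p\sum_k\Sb^*_k[(\Sb^*_k)^{p-1}-(\Sb^*_{k-1})^{p-1}]$, which, via $\int_{1}^{\Sb^*_N} x\,d(x^{p-1})=\tfrac{p-1}{p}((\Sb^*_N)^p-1)$, is dominated by $\ka c'_p(\Sb^*_N)^p$ with $c'_p=(p-1)c_p/p$. Hence $(f,\gam)$ is a perfect portfolio for $(1-\ka c'_p)(\Sb^*_N)^p$, and positive homogeneity of $V$ yields
$$
V\bigl((\Sb^*_N)^p\bigr)\le \frac{C_p\,\cP(x^p)}{1-\ka c'_p}<\infty,
$$
provided $p$ and $\del$ are tuned so that $\ka c'_p<1$. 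The main technical obstacle is precisely this calibration under the standing assumption $\ka<1/4$: the Doob constants $C_p,c_p,c'_p$ blow up as $p\downarrow 2$ or $\del\to 0$, and must be balanced against $\ka$. If a single choice of $(p,\del)$ does not suffice, one may iterate the same Doob estimate once to shrink the transaction-cost constant at the cost of a (still finite) multiplicative factor on $\cP(x^p)$.
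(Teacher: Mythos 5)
Your overall strategy matches the paper's: bound $\alpha_M(\Sb)\le (\Sb^*_N)^r/M^{r-2}$ for some exponent $r>2$, construct a trajectorial Doob super-hedge for $(\Sb^*_N)^r$ (a short position in $(\Sb^*)^{r-1}$ plus a static option $\propto \Sb_N^r$), control the transaction-cost term by monotonicity of the hedge, and use positive homogeneity of $V$. But your version leaves a genuine gap and contains one incorrect estimate.

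\textbf{The exponent must be tuned, and the direction of tuning is the opposite of what you describe.} You fix the exponent equal to the $p$ from Assumption \ref{a.kappa}, then acknowledge that the resulting transaction-cost constant $\kappa c_p'$ may fail to be $<1$, and suggest ``iterating'' as a fallback. This is the crux of the lemma and cannot be left open; moreover the iteration idea does not repair it (once $1-\kappa c_p'\le 0$ there is nothing to rescale). The paper's resolution is to work with an auxiliary exponent $r\in(2,p)$ with $r$ \emph{close to} $2$: then $x^r\le 1+x^p$ still gives $\cP(x^r)<\infty$, the bound $\alpha_M\le\|\Sb\|^r/M^{r-2}$ still holds because $r>2$, and the relevant transaction-cost constant is $\lambda=\kappa\, r c_r$ with $c_r=r/(r-1)$. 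Since $rc_r=r^2/(r-1)$ is increasing on $(2,\infty)$ with infimum $4$ at $r=2$, the condition $\kappa<1/4$ guarantees the existence of $r\in(2,p)$ with $\lambda<1$. Your statement that the ``Doob constants blow up as $p\downarrow 2$'' is the reverse of the truth: the relevant constant is best (smallest) near the exponent $2$, which is exactly why $\kappa<1/4$ is the right threshold. Choosing $r$ small is therefore not a liability but the whole point.

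\textbf{The Riemann--Stieltjes refinement of the transaction-cost bound is wrong.} Since $\Sb^*$ is nondecreasing and you evaluate at the right endpoint, $\sum_k \Sb^*_k\bigl[(\Sb^*_k)^{r-1}-(\Sb^*_{k-1})^{r-1}\bigr]$ is an \emph{upper} Riemann--Stieltjes sum for $\int x\,d(x^{r-1})$; it dominates the integral rather than being dominated by it, so the claimed improvement to $c_r'=\tfrac{r-1}{r}c_r$ is not available by this route. The correct (and sufficient) bound is the crude one obtained by pulling $\Sb_k\le\|\Sb\|$ outside and telescoping, giving $\kappa\,\|\Sb\|\cdot rc_r(\Sb^*_{N-1})^{r-1}\le \kappa\, rc_r\|\Sb\|^r$, exactly as in the paper. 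With the exponent corrected to a suitable $r\in(2,p)$ this closes the argument.
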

\begin{proof} Let $p>2$ be the exponent in Assumption
\ref{a.kappa}.
Since $\kappa<1/4$ ,
there exists $r\in (2,p)$ such that
$\lambda:=\kappa r c_r<1$,
with
$$
c_r := \frac{r}{r-1}.
$$
We now use Proposition 2.1 in \cite{ABPST}
with the portfolio
$\hat\pi=(\hat f,\hat\gamma)$ given by
$$
\hat{f}(\Sb_N):=
\left(c_r\ \Sb_N\right)^r -c_r,\qquad
\hat\gamma(i,\Sb)=
- r c_r\  \left( \Sb^*_k\right)^{r-1}, \ \ k<N.
$$
We use (\ref{2.1+}) and Proposition 2.1 in \cite{ABPST}
to arrive at
\begin{eqnarray}
\label{3.3}
Y^{\hat\pi}_N(\Sb)&\geq &\| \Sb \|^{r}-\kappa
\sum_{i=0}^{N-1}\Sb_i
\left|\hat\gamma(i,\Sb)-\hat\gamma(i-1,\Sb)\right|\\
\nonumber
&\geq&
\| \Sb \|^{r}-\kappa \|\Sb\|
\sum_{i=0}^{N-1}\left(\hat\gamma(i-1,\Sb)-
\hat\gamma(i,\Sb)\right)=
\|\Sb \|^{r}(1-\lambda).
\end{eqnarray}
Hence,
$$
V\left((1-\lambda) \ \|\Sb\|^r\right) \le \cP( \hat{f}).
$$
Clearly,
$\alpha_M(\Sb) \le \|\Sb\|^r/M^{r-2}$.
Hence, by Lemma \ref{l.add},
\begin{eqnarray*}
V( \alpha_M) &\le& V\left(  \frac{\|\Sb\|^r}{M^{r-2}} \right)
=
 \frac{1}{(1-\lambda)M^{r-2}} V\left((1-\lambda) \ \|\Sb\|^r\right)\\
&\le&
 \frac{1}{(1-\lambda)M^{r-2}} \cP(\hat{f}).
\end{eqnarray*}
Since  $\hat{f}\in\mathcal{H}$,
 $\cP(\hat{f})$ is finite.
Therefore,
$$
\lim_{M \to \infty}
V(\alpha_M)\leq 0.
$$
To complete the proof,
we  recall the proof of \reff{e.easy}
to restate that for every $M$,
$$
\sup_{\mathbb Q\in \mathcal{M}_{\kappa,\cP} }\mathbb{E}_{\mathbb Q}
\left[\alpha_M(\Sb)\right]
\le
V(\alpha_M).
$$
\end{proof}

This result allows us to
consider bounded claims.
We also use a compactness argument,
to obtain the following equivalence.

\begin{theorem}\label{lem3.1}
It suffices to prove \reff{2.2}
for non-negative, bounded, uniformly
continuous claims.
\end{theorem}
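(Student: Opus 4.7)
The plan is to make three successive reductions in the class of admissible claims, each one preserving the validity of the inequality $V(G)\le \sup_{\Q\in\cM_{\kappa,\cP}}\E_\Q[G(\Sb)]$. First I would pass from quadratic-growth upper semi-continuous $G$ to bounded upper semi-continuous $G$; then shift by a constant to make $G$ non-negative; and finally replace $G$ by a decreasing sequence of Lipschitz (hence uniformly continuous) approximants obtained by sup-convolution. The main obstacle lies in the last step, where one must interchange an infimum over $n$ with a supremum over $\Q$, and therefore requires tightness and weak-closedness properties of $\cM_{\kappa,\cP}$.

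For the first reduction, given $G$ with $|G(\om)|\le L(1+\|\om\|^2)$, set $G_M:=(G\vee(-L(1+M^2)))\wedge L(1+M^2)$. Then $G_M$ is bounded and upper semi-continuous, and $|G-G_M|\le 2L\,\alpha_M$ pointwise. Sub-additivity of $V$ from Lemma \ref{l.add} gives $V(G)\le V(G_M)+2L\,V(\alpha_M)$, while the pointwise bound $G_M\le G+2L\,\alpha_M$ yields $\sup_\Q \E_\Q[G_M]\le \sup_\Q \E_\Q[G]+2L\sup_\Q \E_\Q[\alpha_M]$. Assuming the target inequality for $G_M$ and combining the two estimates, Lemma \ref{l.vienna} produces the inequality for $G$ upon letting $M\to\infty$. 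For the second reduction, the normalization $\cP(a)=a$ together with convexity and positive homogeneity of $\cP$ forces $\cP(f+c)=\cP(f)+c$ for every constant $c$, so $V$ is translation-equivariant: $V(G+c)=V(G)+c$. Since also $\E_\Q[G+c]=\E_\Q[G]+c$, shifting a bounded $G$ by $\|G\|_\infty$ to make it non-negative is harmless.

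For the third reduction, given a bounded, non-negative, upper semi-continuous $G$, define the sup-convolution $g_n(\om):=\sup_{\om'\in\Om}\{G(\om')-n\|\om-\om'\|\}$. Each $g_n$ is $n$-Lipschitz, non-negative, bounded by $\|G\|_\infty$, and $g_n\downarrow G$ pointwise by upper semi-continuity. Since $g_n\ge G$, $V(G)\le V(g_n)$; the hypothesis then gives $V(G)\le \inf_n \sup_\Q \E_\Q[g_n]$, and it remains to show $\inf_n \sup_\Q \E_\Q[g_n]\le \sup_\Q \E_\Q[G]$. I would pick $\Q_n\in\cM_{\kappa,\cP}$ that is $1/n$-optimal for $g_n$ and extract a weakly convergent subsequence $\Q_{n_k}\to\Q_\infty$. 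Tightness comes from uniform $L^p$ bounds on $\|\Sb\|$ under any $\Q\in\cM_{\kappa,\cP}$, obtained by combining $\E_\Q[\Sb_N^p]\le \cP(x^p)<\infty$ with Doob's $L^p$ inequality for the martingale $\tilde\Sb$. Then for any fixed $m$ and all large $k$, $\E_{\Q_{n_k}}[g_{n_k}]\le \E_{\Q_{n_k}}[g_m]\to \E_{\Q_\infty}[g_m]$, since $g_m$ is bounded continuous; dominated convergence $\E_{\Q_\infty}[g_m]\downarrow \E_{\Q_\infty}[G]$ as $m\to\infty$ closes the argument, provided $\Q_\infty\in\cM_{\kappa,\cP}$. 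Establishing this last property is the technical heart of the reduction: the approximate martingale condition \reff{2.2-} and the pricing constraint \reff{a.mu} must be shown to pass to weak limits, which relies on uniform integrability from the moment bound and on testing \reff{2.2-} against bounded continuous $\cF_k$-measurable functionals.
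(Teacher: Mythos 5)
Your three-step reduction has the same skeleton as the paper's appendix and all the same load-bearing ingredients: truncate against $\alpha_M$ using Lemma \ref{l.vienna} and the subadditivity/monotonicity of $V$; use translation-equivariance of $V$ (your derivation of $\cP(f+c)=\cP(f)+c$ from convexity, positive homogeneity and $\cP(a)=a$ is correct) to assume non-negativity; approximate a bounded usc $G$ \emph{from above} by uniformly continuous functions; and then take $1/n$-optimal measures for the approximants, extract a weakly convergent subsequence, and argue that the limit law lies in $\cM_{\kappa,\cP}$. The one place where you genuinely depart from the paper is in how you manufacture the approximating sequence. The paper constructs, in its auxiliary Lemma, piecewise-linear interpolants of local suprema over a mesh-$1/n$ grid, proves the diagonal property $\limsup G_n(x_n)\le G(x)$, and then needs a separate lemma (shift plus spatial cutoff) to upgrade ``bounded continuous'' to ``non-negative bounded uniformly continuous.'' You instead use the sup-convolution $g_n(\om)=\sup_{\om'}\{G(\om')-n\|\om-\om'\|\}$, which is shorter, gives a monotone decreasing sequence, and delivers the $n$-Lipschitz property at once, collapsing the paper's two-stage ``usc $\to$ continuous $\to$ uniformly continuous'' chain into one step. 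Your tightness argument via Doob's $L^p$ inequality for $\tilde\Sb_k=\E_\Q[\Sb_N|\cF_k]$ combined with $(1-\kappa)\Sb_k\le\tilde\Sb_k$ is also a slightly cleaner route to the moment bound than the paper's Cauchy--Schwarz computation (which had to handle the discretized $\hat\cQ_n$, a complication you do not face here since your $\Q_n$ are exact consistent price systems). What you, like the paper, leave as a sketch is the verification that the weak limit $\Q_\infty$ satisfies \reff{2.2-} and \reff{a.mu}; the paper defers to ``similar compactness arguments as in Lemma \ref{lem3.4},'' and your description of testing against bounded continuous $\cF_k$-measurable functionals plus uniform integrability is the right outline of that argument, so the two treatments are on equal footing there.
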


Since the proof of this result
is almost orthogonal to the rest of the paper,
we relegate it to the  Appendix.

In view of the above Theorem, in the sequel
we assume that the claim $G$ is non-negative,
bounded and is uniformly continuous.  So we
assume that there exists a constant $K>0$ and
a modulus of continuity, i,e., a continuous
function  $m: \R_+ \to \R_+$
with $m(0)=0$, satisfying,
\begin{equation}
\label{e.buc}
0 \le G(\om) \le K, \qquad
\left| G(\om) - G(\tilde \om) \right| \le m\left(\|\om - \tilde \om\|
\right), \qquad
\forall\ \om, \tilde \om \in \Om.
\end{equation}

If $V(G)=-\infty$ then (\ref{2.2}) is clear. Thus in view of Lemma \ref{l.add}
it follows that without loss of generality, we can assume that
\begin{equation}
\label{e.new}
V(G)\geq 0.
\end{equation}

\subsection{Discretization of the space}
\label{ss.discrete}

Next, we introduce a modification of the original super--replication problem.
Fix $n\in\mathbb{N}$ and set $h=1/n$ and
$U_n=\left\{kh, \ k=0,1,....\right\}$. Denote
$$\mathcal{H}_n:=\{f:U_n\rightarrow\mathbb{R}|\exists C>0 \
\mbox{such} \ \mbox{that} \ |f(x)|\leq C(1+x^p) \ \forall x\}.
$$

For any $g:U_n\rightarrow \mathbb{R}$,
define a function $\cL^{(n)}(g) :\mathbb{R}_{+}\rightarrow\mathbb{R}$ by,
\begin{equation}
\label{3.5}
\cL^{(n)}(g)(x):= (1-\alpha) g\left(\lfloor nx \rfloor h\right)+\alpha
g\left((\lfloor nx \rfloor+1)h\right), \ \ \alpha= nx - \lfloor nx \rfloor,
\qquad x \in \R_+,
\end{equation}
where for a real number $y$, $\lfloor y \rfloor$ is the largest integer
less than or equal to $y$.
Observe that
 $$
 \cL^{(n)} : \mathcal{H}_n \to  \mathcal{H},
 $$
 is a bounded, linear map.

Set $\Omega_n= \left(U_n\right)^N$.
Clearly $\Omega_n\subset \Omega$ and we consider
a financial market where the set of possible
stock price processes is the
set $\Omega_n$.
Then, this  restriction lowers the minimal
super-replication cost.  However,
we restrict the admissible portfolios as well.
Indeed, for a constant $M>0$,
we define the set of {\em{admissible}}
portfolio strategies  below.

\begin{definition}
\label{d.admissible}
{\rm{For any $M>1$,
we say that $\pi:=(g,\gamma)$
is an ($M$-)}}admissible portfolio,
{\rm{ if $g\in\mathcal{H}_n$
and $\gamma:\{0,1,...,N-1\}\times \Omega_n\rightarrow \mathbb{R}$
is a progressively measurable map
sastisfying
$$
|\gamma(i,\Sb)-\gamma(i-1,\Sb)|\leq M,\qquad
\forall \ i\geq 0,\ \Sb \in \Omega_n.
$$
We denote by $\mathcal{A}^n_M$
the set of all admissible portfolios.
A portfolio $\pi\in \mathcal{A}^n_M$ is called}}
 perfect {\rm{ (or perfectly super-replicating) if
\begin{equation*}
Y^\pi_N(\Sb)\geq  G(\Sb), \ \ \forall \
\Sb\in\Omega_n,
\end{equation*}
where $Y^\pi_N$ is given by (\ref{2.1+})}}.
\end{definition}

The minimal super--replication cost is given by
\begin{equation}
\label{3.7}
V^{n,M}(G):=\inf\left\{\cP^{(n)}(g) \ |\ \pi:=(g,\gamma)\in\mathcal{A}^n_M
\  {\mbox{is a perfect portfolio }}\right\},
\end{equation}
where we choose the price function as
\begin{equation}
\label{e.pn}
\cP^{(n)}(g) := \cP\left(\cL^{(n)}(g)\right).
\end{equation}
The following provides the crucial  connection between the original
and the discretized problems. Recall that $h=1/n$.
\begin{proposition}
\label{lem3.2} Assume $G$ satisfies
\reff{e.buc} with a
modulus function $m$.
Then,
for any $M>0$ and $n\in \mathbb{N}$,
$$
V(G)\leq V^{n,M}(G)+
(N + 2 \kappa )MNh+m(h).
$$
\end{proposition}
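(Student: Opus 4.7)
The plan is to lift an almost-optimal $M$-admissible perfect portfolio $\pi=(g,\gamma)\in\mathcal{A}^n_M$ for the discretized problem to a portfolio on all of $\Omega$ that perfectly super-replicates $G$ after adding a small cash buffer. Fix $\varepsilon>0$ and choose $\pi$ with $\mathcal{P}^{(n)}(g)\le V^{n,M}(G)+\varepsilon$. For $\Sb\in\Omega$, let the grid projection be $\tilde\Sb_k:=\lfloor n\Sb_k\rfloor h$, so that $\tilde\Sb_0=1$ and $0\le\Sb_k-\tilde\Sb_k<h$. I propose the lifted portfolio $\tilde\pi=(\tilde f,\tilde\gamma)$ with European leg $\tilde f:=\mathcal{L}^{(n)}(g)\in\mathcal{H}$ (so $\mathcal{P}(\tilde f)=\mathcal{P}^{(n)}(g)$ by (\ref{e.pn})) and dynamic trading $\tilde\gamma(i,\Sb):=\gamma(i,\tilde\Sb)$; this is progressively measurable since $\tilde\Sb_0,\ldots,\tilde\Sb_i$ depend only on $\Sb_0,\ldots,\Sb_i$.

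To compare $Y^{\tilde\pi}_N(\Sb)$ with $G(\Sb)$, introduce the two auxiliary grid paths $\tilde\Sb^0,\tilde\Sb^1\in\Omega_n$ that coincide with $\tilde\Sb$ at times $0,\ldots,N-1$ and differ only at maturity with $\tilde\Sb^0_N=\lfloor n\Sb_N\rfloor h$, $\tilde\Sb^1_N=\tilde\Sb^0_N+h$. Progressive measurability forces $\gamma_i:=\gamma(i,\tilde\Sb^0)=\gamma(i,\tilde\Sb^1)$ for $i\le N-1$. Applying the grid super-replication at each $\tilde\Sb^j$ and taking the convex combination with weight $\alpha:=n\Sb_N-\lfloor n\Sb_N\rfloor$ gives
\begin{equation*}
(1-\alpha)Y^\pi_N(\tilde\Sb^0)+\alpha Y^\pi_N(\tilde\Sb^1)\;\ge\;(1-\alpha)G(\tilde\Sb^0)+\alpha G(\tilde\Sb^1)\;\ge\;G(\Sb)-m(h),
\end{equation*}
where the last inequality uses $\|\tilde\Sb^j-\Sb\|\le h$ together with the modulus of continuity in (\ref{e.buc}). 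On the left, convex combination collapses $(1-\alpha)g(\tilde\Sb^0_N)+\alpha g(\tilde\Sb^1_N)$ into $\mathcal{L}^{(n)}(g)(\Sb_N)=\tilde f(\Sb_N)$ and turns the terminal stock term $(1-\alpha)\gamma_{N-1}(\tilde\Sb^0_N-\tilde\Sb_{N-1})+\alpha\gamma_{N-1}(\tilde\Sb^1_N-\tilde\Sb_{N-1})$ into $\gamma_{N-1}(\Sb_N-\tilde\Sb_{N-1})$, so the left-hand side is a clean expression in $\tilde\Sb$ and $\gamma$.

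The remaining step is to bound the discrepancy between this convex combination and $Y^{\tilde\pi}_N(\Sb)$. Writing $\Delta_i:=\Sb_i-\tilde\Sb_i\in[0,h)$ with $\Delta_0=0$, an Abel summation rewrites the trading-profit mismatch as $\sum_{i=1}^{N-1}(\gamma_{i-1}-\gamma_i)\Delta_i$, of absolute value at most $(N-1)Mh$ by the admissibility constraint $|\gamma_i-\gamma_{i-1}|\le M$; the transaction-cost mismatch $-\kappa\sum_{i=0}^{N-1}\Delta_i|\gamma_i-\gamma_{i-1}|$ is likewise bounded by $\kappa(N-1)Mh$. Hence $Y^{\tilde\pi}_N(\Sb)\ge G(\Sb)-m(h)-(1+\kappa)(N-1)Mh$, which is strictly stronger than the claimed penalty. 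Adding the constant buffer $C:=(N+2\kappa)MNh+m(h)$ to $\tilde f$ produces a genuinely perfect portfolio; by (\ref{a.cost}) and the sub-additivity of $\mathcal{P}$ (consequence of convexity and positive homogeneity), its price is $\mathcal{P}(\tilde f)+C\le V^{n,M}(G)+\varepsilon+C$, and sending $\varepsilon\downarrow 0$ completes the proof. The main bookkeeping obstacle is arranging the Abel summation so that each error term carries only one factor of the mesh $h$ and one factor of the admissibility bound $M$, once the terminal coordinate has been absorbed into the European component via the interpolation $\mathcal{L}^{(n)}$.
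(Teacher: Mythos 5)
Your proof is correct and follows essentially the same route as the paper: lift the discrete portfolio via $\cL^{(n)}$ and the floor projection, compare $Y^{\tilde\pi}_N(\Sb)$ to the convex combination of $Y^\pi_N$ evaluated at the two neighbouring grid paths $\tilde\Sb^0,\tilde\Sb^1$, and invoke the modulus of continuity. The only deviation is your Abel-summation bookkeeping for the trading-profit mismatch, which sharpens the discrepancy bound to $(1+\kappa)(N-1)Mh$ in place of the paper's cruder estimate $(N+2\kappa)MNh$ (the paper bounds each $\gamma_i$ by $MN$ rather than working with increments); since this is stronger than what the proposition asserts, the argument goes through a fortiori.
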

\begin{proof}
Assume that we have a perfect hedge
$\pi=(g,\gamma) \in \mathcal{A}^n_M$
in the sense of Definition \ref{d.admissible}.
We continue by lifting
this portfolio to a portfolio
$\tilde\pi=( f,\tilde\gamma)$ that is defined
on $\Om$.

Let $ f= \cL^{(n)}(g)$ be as in \reff{3.5}
and define $\tilde \gamma$ by
$$
\tilde\gamma(k,\om):=
\gamma(k,\om_0,\lfloor n \om_1\rfloor h,\ldots,
\lfloor n \om_N\rfloor h), \ \
\forall \ k< N, \   \om=(\om_0,\ldots,\om_N)\in \Omega,
$$
where as before $h=1/n$ and
$\lfloor y \rfloor$ is the integer part of $y$.  Clearly
$\tilde\gamma:\{0,1,...,N-1\}\times \Omega\rightarrow \mathbb R$
is progressively
measurable, and $|\tilde\gamma(i,\Sb)-\tilde\gamma(i-1,\Sb)|\leq M$ for
any $i\geq 0$ and $\Sb \in \Omega$.

For  $\Sb \in \Omega$ define $\Sb^{(1)},\Sb^{(2)}$ by
$$
\Sb^{(1)}_k:=\lfloor n \Sb_k \rfloor h, \ \
\Sb^{(2)}_k:= \Sb^{(1)}_k+ h \delta_k^N,
\ \ {\mbox{for all}}\  k\leq N,
$$
where $ \delta_k^N$ is equal to one when
$k=N$ and zero otherwise.
Then,
there exists $\lambda\in [0,1]$ such that $\Sb_N=
\lambda \Sb^{(1)}_N+(1-\lambda)\Sb^{(2)}_N$.
Also both $\| \Sb^{(1)}-\Sb \|$ and $\| \Sb^{(2)}-\Sb \|$
are less than $h=1/n$.  Moreover,
$\gamma(k,\Sb^{(1)})=\gamma(k,\Sb^{(2)})=\tilde\gamma(k,\Sb)$
for every $k <N$.
We use these
together with (\ref{2.1+}), (\ref{3.5}) and the fact that $\gamma\in [-M N,M N]$.
The result is
\begin{eqnarray}
\label{3.10}
Y^{\tilde\pi}_N(\Sb)&\geq& \lambda Y^\pi_N(\Sb^{(1)})
+(1-\lambda)Y^\pi_N(\Sb^{(2)})- (N+2\kappa)MNh\\
\nonumber
&\geq&\lambda G(\Sb^{(1)})+(1-\lambda)G(\Sb^{(2)})-(N+2\kappa)MNh \\
&\geq& G(\Sb)-m\left(h\right)-(N+2\kappa)MNh,
\nonumber
\end{eqnarray}
where the last inequality follows from  \reff{e.buc}.
Thus $(f+m\left(h\right)+(N+2\kappa)MNh,\tilde\gamma)$
is a perfect hedge
in the sense of Definition \ref{d.perfect}. This together with the equality
$\cP_n(g)=\cP(f)$
completes the proof.
\end{proof}

\subsection{Analysis of $V^{n,\sqrt {n}}(G)$}
\label{ss.analysis}
Fix $n>0$.
From Proposition \ref{lem3.2} and (\ref{e.new}), it follows that for sufficiently
large $n$
\begin{equation}
\label{e.neww}
V^{n,\sqrt {n}}(G)\ge -1.
\end{equation}

Fix $n\in\mathbb{N}$ sufficiently large such that (\ref{e.neww})
holds true.
We introduce three auxiliary sets.
Let $\cW_n$ be the set of all functions $g\in \mathcal{H}_n$
which satisfy the growth condition
$$
\|g\|_* : = \sup_{\{x \in U_n\}} \ \left\{
\frac{|g(x)|}{(1+x)^p}\ \right\} \le n.
$$
Let
${\cQ}_n$ be the set of  all probability measures
 $\mathbb Q$ on $\Omega_n$ which satisfy
$$
\mathbb{E}_\Q \left[\|\Sb\|^p\right]<\infty.
$$
Finally, let $\hat \cQ_n$ be
set of all probability measures $\mathbb Q \in \cQ_n$ which satisfy
\begin{equation}
\label{3.10+}
\E_\Q\left[\ g(\Sb_N)\ \right] \le \cP^{(n)}(g)
+ \frac{K+1}{n}\ \| g\|_*,
\qquad
\forall g \in \cW_n,
\end{equation}
where
$$
K:= \sup_{\Sb \in \Om} \ G(\Sb).
$$
We shall show in the below proof that
in view of \reff{e.neww}, the set of measures
$\hat \cQ_n$ is non-empty
for all sufficiently large $n$.

The following provides an upper bound for the
super-replication cost $V^{n,\sqrt {n}}$ defined
by (\ref{3.7}).
\begin{lemma}
\label{lem3.3}
Suppose that $G$ satisfies
\reff{e.buc}--(\ref{e.new}).  Then,
for all sufficiently large $n$,
$$
V^{n,\sqrt {n}}(G)\leq \sup_{\mathbb  Q\in {\hat \cQ}_n}
\mathbb{E}_\Q\left(G(\Sb)-\sqrt{n}\sum_{k=0}^{N-1}
\left(\left|\mathbb{E}_\Q(\Sb_N|\cF_k)-\Sb_k\right|-\kappa \Sb_k\right)^{+}\right).
$$
\end{lemma}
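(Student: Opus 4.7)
The strategy is Lagrangian (Kuhn--Tucker) duality, exploiting the rewriting of $Y^\pi_N$ used in the proof of Theorem \ref{thm2.1}. Setting $\beta(j,\Sb) := \gamma(j,\Sb) - \gamma(j-1,\Sb)$, the admissibility condition in $\mathcal{A}^n_{\sqrt n}$ becomes $|\beta(j,\Sb)| \leq \sqrt n$ pointwise, and the super-replication inequality $Y^\pi_N(\Sb) \geq G(\Sb)$ on $\Omega_n$ takes the form $f(\Sb_N) + \sum_j \beta(j,\Sb)(\Sb_N - \Sb_j) - \kappa \sum_j \Sb_j|\beta(j,\Sb)| \geq G(\Sb)$. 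As a preliminary I would restrict the primal infimum to $f \in \cW_n$; since this only shrinks the feasible set, it can only increase the infimum, so any upper bound on the restricted primal bounds $V^{n,\sqrt n}(G)$ from above. This restriction is precisely what produces the approximate pricing slack $\tfrac{K+1}{n}\|g\|_*$ in the definition of $\hat\cQ_n$.

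Next I would introduce a probability measure $\Q$ on $\Omega_n$ as a Lagrange multiplier for the pathwise constraint and exchange the order of $\inf$ and $\sup$. The inner infimum decomposes across $f$ and $\beta$: the $f$-part yields $-\Theta(\Q)$ with $\Theta(\Q) := \sup_{f \in \cW_n}[\mathbb{E}_\Q[f(\Sb_N)] - \cP^{(n)}(f)] \geq 0$; the $\beta$-part reduces, using $\mathbb{E}_\Q[\Sb_N|\cF_j] = \tilde\Sb_j$ (by tower property since $\beta(j)$ is $\cF_j$-measurable) and the pointwise constraint $|\beta(j)| \leq \sqrt n$, to the scalar minimization of $b \mapsto -b(\tilde\Sb_j - \Sb_j) + \kappa\Sb_j |b|$ over $|b| \leq \sqrt n$, whose explicit value is $-\sqrt n (|\tilde\Sb_j - \Sb_j| - \kappa\Sb_j)^+$. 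The combined bound reads $V^{n,\sqrt n}(G) \leq \sup_\Q\{\mathbb{E}_\Q[G(\Sb)] - \Theta(\Q) - \sqrt n \sum_j \mathbb{E}_\Q[(|\tilde\Sb_j - \Sb_j| - \kappa\Sb_j)^+]\}$.

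To restrict the supremum to $\hat\cQ_n$ and drop the $-\Theta(\Q)$ term, I would use $V^{n,\sqrt n}(G) \geq -1$ from \reff{e.neww} together with $0 \leq G \leq K$. If $\Q \notin \hat\cQ_n$, some $g \in \cW_n$ satisfies $\mathbb{E}_\Q[g] - \cP^{(n)}(g) > \tfrac{K+1}{n}\|g\|_*$; rescaling by $n/\|g\|_*$ (which stays in $\cW_n$ and preserves the inequality by positive homogeneity of $\cP^{(n)}$) yields $\Theta(\Q) > K+1$, hence the dual value at such $\Q$ is bounded above by $\mathbb{E}_\Q[G] - \Theta(\Q) \leq K - (K+1) = -1 \leq V^{n,\sqrt n}(G)$. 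Thus $\Q$'s outside $\hat\cQ_n$ cannot strictly exceed the supremum over $\hat\cQ_n$, and on $\hat\cQ_n$ we simply use $\Theta(\Q) \geq 0$ to drop the $-\Theta(\Q)$ term and arrive at the stated inequality.

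The principal obstacle is rigorously justifying the inf-sup exchange on the countably infinite space $\Omega_n = (U_n)^N$, where classical finite-dimensional LP duality is not immediately available. The intended route is to truncate to finite grids $\Omega_n^R := \{\Sb \in \Omega_n : \max_k \Sb_k \leq R\}$, apply elementary strong LP duality (or Farkas) on each $\Omega_n^R$, and then pass to the limit $R \to \infty$. Tightness of near-optimal dual measures is ensured by the $p$-th moment condition built into $\cQ_n$, while the tail super-replication estimate of Lemma \ref{l.vienna} (equivalently, the growth assumption) controls the contribution of $\{\|\Sb\| > R\}$ in the limit and ensures that the $\cW_n$-bounded pricing slack survives.
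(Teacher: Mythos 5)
Your proposal reproduces the paper's structure and all of the substantive computations: the decomposition of the Lagrangian into an $f$-part giving $-\Theta(\Q)$, the scalar pointwise minimization over $|\beta(j,\Sb)|\le\sqrt n$ yielding $-\sqrt n\,(|\tilde\Sb_j-\Sb_j|-\kappa\Sb_j)^+$, the rescaling to $\|g^*\|_*=n$ via positive homogeneity of $\cP$ to show $\Theta(\Q)>K+1$ for $\Q\notin\hat\cQ_n$, the use of \reff{e.neww} to dismiss such $\Q$'s, and dropping $-\Theta(\Q)\le 0$ on $\hat\cQ_n$. The endgame is identical.

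Where you and the paper diverge is precisely at the point you flag as the principal obstacle, and the divergence is instructive. You propose a single joint $\inf$-$\sup$ exchange over the combined variables $(f,\gamma)$, justified by truncating $\Omega_n$ to finite grids, applying finite LP duality, and passing to the limit. The paper instead decouples this into two exchanges. First, for each \emph{fixed} $g\in\cW_n$, it writes $V^{n,\sqrt n}(G)\le\cP^{(n)}(g)+\tilde V(G-g(\Sb_N))$ and applies Theorem~\ref{thm4.1} (only the ``$\le$'' direction, inequality \reff{4.1}), whose proof in Section~4 is exactly the truncation argument you envisage --- LP duality on the finite markets $\Omega^J_n$ via Theorem~3.1 of \cite{DS1}, followed by a diagonal extraction of $\gamma$-strategies as $J\to\infty$. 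This gives $V^{n,\sqrt n}(G)\le\inf_{g\in\cW_n}\sup_{\Q\in\cQ_n}H(g,\Q)$. Second, the exchange of $\inf_g$ and $\sup_\Q$ is \emph{not} a truncation argument: it is an application of a min-max theorem (Theorem~2 of \cite{BHLP}), after checking convexity of $H$ in $g$, concavity in $\Q$ (via the explicit sum-over-cylinders representation of the penalty), and compactness of $\cW_n$ in the topology of pointwise convergence on the countable grid $U_n$ (Tychonoff, using the growth cap $\|g\|_*\le n$). The reason the paper decouples is that the $\gamma$'s admit no useful compactness on the infinite $\Omega_n$, so a joint compactness/minimax argument is not available; isolating the $\gamma$-exchange in Theorem~\ref{thm4.1} lets the truncation-and-diagonal machinery handle it, while the $g$-exchange is then a textbook application of Sion-type minimax because $\cW_n$ is genuinely compact. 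Your blueprint is correct but stops at naming the difficulty; the paper's two-stage factorization is how that difficulty is actually discharged.
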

\begin{proof}
Define
$H:\cW_n\times \cQ_n\rightarrow \mathbb{R}$ by
$$
H(g,\Q):=
\E_\Q\left(G(\Sb)-g(\Sb_N)-\sqrt{n}\sum_{k=0}^{N-1}
\left(|\E_\Q[\Sb_N|\cF_k]-\Sb_k |-\kappa \Sb_k\right)^{+}\right)+
\cP^{(n)}(g).
$$
Since $\cP^{(n)}$ is finite on $\cW_n$,
in view of  the
definitions of  $\cW_n$ and $\cQ_n$,
$H$ is well defined.
We now use Theorem \ref{thm4.1}
that will be proved in the next section
with $F(\Sb):=G(\Sb)-g(\Sb_N)$
with an arbitrary $g \in \cW_n$.
This yields that
$$
V^{n,\sqrt {n}}(G)\leq \sup_{\mathbb Q\in \cQ_n}
H(g,\mathbb Q), \qquad
\forall \ g \in \cW_n.
$$
Hence,
\begin{equation}\label{3.11}
V^{n,\sqrt {n}}(G)\leq\inf_{g\in W_n}\sup_{\mathbb Q\in
\cQ_n}H(g,\mathbb Q).
\end{equation}
Since the functions in $\cW_n$ are restricted
to satisfy the growth condition, the above
is possibly an inequality and not an equality.

Next, we continue by interchanging the order
of the above infimum and supremum.
For that purpose, consider
the vector space $\mathbb{R}^{U_n}$ of all functions
$g:U_n\rightarrow\mathbb{R}$ induced with the topology of
point-wise
convergence. This space is locally convex
and since $U_n$ is countable,
$\cW_n\subset \mathbb{R}^{U_n}$ is compact.
Also, the set $\cQ_n$ can be naturally considered as a convex subspace of
the vector space $\mathbb{R}^{\Omega_n}$.
In order to apply a min-max theorem,
we need to show continuity and concavity.
In view of Assumption \ref{a.kappa},
in the first variable $H$ is convex and is therefore
continuous due to the dominated convergence theorem.
We also claim
that $H$ is concave in the second
variable. For this, it is sufficient to show that for
any $k<N$ the functional
$\mathbb{E}_\Q\left(|\mathbb{E}_\Q(\Sb_N |\cF_k)
-\Sb_k|-\kappa \Sb_k\right)^{+}$
is convex in $\mathbb{Q}$.
Indeed, this convexity follows
from the following representation
\begin{eqnarray*}
&&
\mathbb{E}_\Q\left(|\mathbb{E}_\Q(\Sb_N|\cF_k)
-\Sb_k|-\kappa \Sb_k\right)^{+}
\\
&=&\sum_{(z_1,...,z_k)\in U^k_n}
\left(\left |\sum_{z_N\in U_n} z_N
\Q(A(z_1,\ldots,z_k,z_N)) - z_k \Q(B(z_1,\ldots,z_k)) \right| - \kappa
z_k \Q(B(z_1,\ldots,z_k))\right)^+
\end{eqnarray*}
where
$$
A(z_1,\ldots,z_k,z_N)= \left\{
\Sb_1=z_1,\ldots,\Sb_k=z_k,\Sb_N=z_N\right\},
\quad
B(z_1,\ldots,z_k)= \left\{
\Sb_1=z_1,\ldots,\Sb_k=z_k\right\}.
$$
We now apply Theorem 2 in \cite{BHLP}
to the function $H$.  The result is
$$
\inf_{g\in \cW_n}\sup_{\mathbb Q\in {\cQ}_n}
H(g,\mathbb Q)=
\sup_{\mathbb Q\in {\cQ}_n}\inf_{g\in \cW_n}H(g,\mathbb Q).
$$
We combine the above inequality with the previous one
to obtain,
\begin{equation}
\label{3.12}
V^{n,\sqrt {n}}(G) \le
\sup_{\mathbb Q\in {\cQ}_n}\inf_{g\in \cW_n}H(g,\mathbb Q).
\end{equation}

Now suppose that $\Q \in  {{\cQ}_n}$
but not in $ {\hat \cQ}_n$. Then,
there is $g^* \in \cW_n$ so that
$$
\E_\Q\left[\ g^*(\Sb_N)\ \right] > \cP^{(n)}(g^*) +
\frac{K+1}{n}\ \|g^*\|_*.
$$
By the positive homogeneity
of $\cP$,
we may assume that $\|g^*\|_*=n$.  Then,
$$
\E_\Q\left[\ g^*(\Sb_N)\ \right] > \cP^{(n)}(g^*) + K+1,
$$
and recall that $K= \sup_{\Omega} G$.
The definition of $H$ yields that
$$
 H(g^*,\Q) \le \E_\Q\left[G(\Sb)\right]
-\E_\Q\left[\ g^*(\Sb_N)\ \right] + \cP^{(n)}(g^*)
<  \E_\Q\left[G(\Sb)\right] -K-1 \le -1.
$$
In view of \reff{e.neww}, we conclude
that there must exists measures in
${\hat \cQ}_n$.  Additionally,
we may restrict the maximization in \reff{3.12}
over the probability measures $\Q \in {\hat \cQ}_n$.
We use this restricted version of \reff{3.12} to arrive at
$$
V^{n,\sqrt {n}}(G) \le
\sup_{\Q \in {\hat \cQ}_n}\inf_{g\in \cW_n}H(g,\mathbb Q)
\le \sup_{\Q \in {\hat \cQ}_n} H(0,\mathbb Q).
$$
Since $\cP(0)=0$, the above is exactly
the statement of the lemma.
\end{proof}

\subsection{Last step of the proof}
We  combine Proposition
\ref{lem3.2} and Lemma \ref{lem3.3}
to conclude that
\begin{equation}
\label{3.12+}
V(G)\leq \lim\inf_{n\rightarrow\infty} \beta_n,
\end{equation}
where
$$
\beta_n:=
\sup_{\mathbb Q\in {\hat \cQ}_n}
\mathbb{E}_\Q\left(G(\Sb)-
\sqrt n\sum_{k=0}^{N-1}
\left(|\mathbb{E}_\Q
(\Sb_N|\cF_k)-\Sb_k|-\kappa \Sb_k\right)^{+}\right).
$$
Thus in order to complete the proof of inequality (\ref{2.2})
 it is sufficient to establish the following.
\begin{lemma}
\label{lem3.4}
Suppose that $G$ satisfies \reff{e.buc}--(\ref{e.new}).  Then,
\begin{equation}
\label{3.13}
\liminf_{n\rightarrow\infty}
\ \beta_n \le \sup_{\mathbb Q\in \mathcal{M}_{\kappa,\cP} }
\mathbb{E}_\Q [G(\Sb)].
\end{equation}
\end{lemma}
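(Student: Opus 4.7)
My plan is to take a sequence of near-optimal measures $\Q_n \in \hat\cQ_n$ for $\beta_n$, establish tightness, extract a weak subsequential limit $\Q^*$, verify $\Q^* \in \mathcal{M}_{\kappa,\cP}$, and conclude via weak convergence of $G$, which is bounded and continuous after the reduction in Theorem \ref{lem3.1}. Concretely, pick $\Q_n$ so that
$$
\E_{\Q_n}\!\Bigl[G(\Sb) - \sqrt n \sum_{k=0}^{N-1}\bigl(|\E_{\Q_n}[\Sb_N\mid\cF_k]-\Sb_k|-\kappa \Sb_k\bigr)^+\Bigr]\ \ge\ \beta_n - \tfrac1n.
$$
Because $G\le K$ and $\beta_n$ is bounded below for large $n$ (by Proposition \ref{lem3.2} and \reff{e.new}), the penalty term is $O(1/\sqrt n)$: for each $k<N$,
$$
\E_{\Q_n}\!\bigl[\bigl(|\E_{\Q_n}[\Sb_N\mid\cF_k]-\Sb_k|-\kappa \Sb_k\bigr)^+\bigr]\ \le\ C/\sqrt n. \qquad(\star)
$$

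For tightness, I test the marginal constraint \reff{3.10+} against $g(x)=x^p$ and $g(x)=x$, whose $\|\cdot\|_*$-norms are bounded independently of $n$ and for which $\cP^{(n)}(g) = \cP(\cL^{(n)}(g))$ is controlled by $\cP(x^p)<\infty$ together with \reff{a.cont}. This yields $\sup_n \E_{\Q_n}[\Sb_N^p]<\infty$. Combining the pointwise inequality $(1-\kappa)\Sb_k - \E_{\Q_n}[\Sb_N\mid\cF_k] \le \bigl(|\E_{\Q_n}[\Sb_N\mid\cF_k]-\Sb_k|-\kappa \Sb_k\bigr)^+$ with the tower property and $(\star)$ gives $(1-\kappa)\E_{\Q_n}[\Sb_k]\le \E_{\Q_n}[\Sb_N]+C/\sqrt n$, so $\sup_n \E_{\Q_n}[\Sb_k]<\infty$ for every $k$. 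Markov's inequality then yields tightness of $\{\Q_n\}$ on $\Om\subset\R_+^{N+1}$, and Prokhorov supplies a weakly convergent subsequence $\Q_{n_j}\to \Q^*$.

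To verify $\Q^* \in \mathcal{M}_{\kappa,\cP}$, I handle the consistent-price-system condition \reff{2.2-} by testing against nonneg bounded continuous $\cF_k$-measurable functions $\phi=\phi(\Sb_0,\ldots,\Sb_k)$: the pointwise bound above and the tower property give $\E_{\Q_n}[\phi \cdot ((1-\kappa)\Sb_k-\Sb_N)]\le \|\phi\|_\infty C/\sqrt n\to 0$, and the uniform $L^p$-bound on $\Sb_N$ (with $p>2$) and derived $L^1$-bound on $\Sb_k$ supply the uniform integrability to pass to the weak limit, yielding $\E_{\Q^*}[\Sb_N\mid\cF_k]\ge (1-\kappa)\Sb_k$ $\Q^*$-a.s.; the symmetric argument gives the other inequality. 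For the marginal constraint \reff{a.mu}, I first take a continuous $f\in\mathcal{H}$: the restriction $f|_{U_n}$ lies in $\cW_n$ for large $n$, the piecewise linear interpolation $\cL^{(n)}(f|_{U_n})\to f$ pointwise, and \reff{a.cont} gives $\limsup_n \cP^{(n)}(f|_{U_n})\le \cP(f)$; combined with \reff{3.10+} and the uniform integrability above we obtain $\E_{\Q^*}[f(\Sb_N)]\le \cP(f)$. This is extended to general $f\in\mathcal{H}$ by approximating $f^+$ and $f^-$ from below by bounded continuous functions in $\mathcal{H}$ and combining monotone convergence with \reff{a.cont}.

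Finally, $G$ being bounded continuous gives $\E_{\Q_{n_j}}[G(\Sb)]\to \E_{\Q^*}[G(\Sb)]$, and dropping the nonneg penalty term produces
$$
\liminf_n \beta_n\ \le\ \liminf_j\E_{\Q_{n_j}}[G(\Sb)]\ =\ \E_{\Q^*}[G(\Sb)]\ \le\ \sup_{\Q\in\mathcal{M}_{\kappa,\cP}}\E_\Q[G(\Sb)],
$$
which is \reff{3.13}. The main obstacle is the uniform-integrability passage to the limit: since the only moment control available is $\sup_n\E_{\Q_n}[\Sb_N^p]<\infty$ for the single $p>2$ of Assumption \ref{a.kappa}, terms such as $\phi\cdot\Sb_N$, $\phi\cdot\Sb_k$ and $f(\Sb_N)$ with $f$ of growth $(1+x^p)$ require careful truncation; a related delicate point is translating the $L^1$-small penalty in $(\star)$ into a genuine $\Q^*$-almost-sure bilateral consistent-price-system inequality, and extending the $\cP$-constraint from continuous to all Borel $f\in\mathcal{H}$ via the upper semicontinuity in \reff{a.cont}.
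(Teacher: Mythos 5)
Your proposal follows the same strategy as the paper: extract near-optimal $\Q_n\in\hat\cQ_n$, derive the $O(1/\sqrt n)$ penalty estimate, obtain moment bounds, show tightness, pass to a weak subsequential limit $\Q^*$, verify $\Q^*\in\mathcal{M}_{\kappa,\cP}$, and conclude using that $G$ is bounded and continuous. Two points are worth sharpening, both of which you flag as delicate but do not actually resolve.

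First, you assert that the uniform $L^1$-bound on $\Sb_k$ together with the $L^p$-bound on $\Sb_N$ ``supply the uniform integrability.'' A uniform $L^1$-bound alone does not give uniform integrability of $\Sb_k$; some extra structure must be used. The paper's mechanism is the decomposition $(1-\kappa)\Sb_k \le \big((1-\kappa)\Sb_k - \E_{\Q_n}[\Sb_N\mid\cF_k]\big)^+ + \E_{\Q_n}[\Sb_N\mid\cF_k]$, bounding the first summand in expectation by the $O(1/\sqrt n)$ penalty estimate, and the second (restricted to the event $\{(1-\kappa)\Sb_k>A\}$) by Cauchy--Schwarz, with Jensen transferring the uniform $L^2$-bound from $\Sb_N$ to the conditional expectation and Markov controlling $\Q_n(\Sb_k>A)$. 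This is a genuine extra step, not a routine consequence of the moment bounds, and it is precisely why (3.15++) in the paper takes the iterated-limit form rather than a clean uniform-in-$n$ statement.

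Second, for extending \reff{a.mu} from continuous $f$ to general $f\in\mathcal{H}$, your plan to approximate $f^+$ and $f^-$ separately from below and use monotone convergence runs into the fact that $\cP$ is only subadditive, not additive: combining a bound on $\cP(f_k^+)$ and one on $\cP(-f_k^-)$ does not yield the needed control on $\cP(f_k^+ - f_k^-)$ in the right direction. The paper avoids this by approximating $f$ directly by continuous functions $g_k$ obeying the same $(1+x^p)$ growth bound, invoking dominated convergence under $\tilde\Q$ using $\tilde{\E}[\Sb_N^p]<\infty$, and then applying $\cP(f)\ge\limsup_k \cP(g_k)$ from \reff{a.cont}; you should adopt this direct route.
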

\begin{proof}
From (\ref{e.new}) and (\ref{3.12+}) it follows that for sufficiently
large $n$, $\beta_n \ge -1$.
Therefore, for
all sufficiently large $n\in\mathbb{N}$,
there exists  $\Q_n\in \hat{\cQ}_n$
so that
\begin{equation}
\label{3.14}
\mathbb{E}^{(n)}\sum_{k=0}^{N-1}
\left(\left(|\mathbb{E}^{(n)}(\Sb_N|
\cF_k)-\Sb_k|-\kappa \Sb_k\right)^{+}\right)\leq
 \frac{K+1}{\sqrt n}
\end{equation}
and
\begin{equation}
\label{3.15}
\beta_n \le \frac{1}{n}+ \mathbb{E}^{(n)}\left(G(\Sb)-\sqrt n\sum_{k=0}^{N-1}
\left(|\mathbb{E}^{(n)}(\Sb_N|\cF_k)-\Sb_k|-\kappa \Sb_k\right)^{+}\right),
\end{equation}
where $\mathbb E^{(n)}$ denotes the expectation
with respect to $\Q_n$.
From (\ref{e.pn}) and (\ref{3.10+}) we get
\begin{equation*}
\mathbb{E}^{(n)}\left[ \Sb^p_N \right] \le
\cP^{(n)}\left(x^p\right) +\frac{K+1}{n} \| x^p\|_*\leq
\cP\left(2+2x^p\right) +K+1<\infty.
\end{equation*}
Hence,
\begin{equation}
\label{3.15+}
\sup_{n\in\mathbb N} \mathbb{E}^{(n)}
\left[ \Sb^p_N \right]<\infty.
\end{equation}

We claim
that the  probability measures
$\Q_n$, $n\in\mathbb N$ are tight.
Indeed, in view of the uniform second moment estimate \reff{3.15+},
tightness would follow from the
uniform integrability which states that
for any $A>0$,
$$
\lim_{A\rightarrow\infty}\
\sup_{n\in\mathbb N}\
\mathbb E^{(n)}(\Sb_k\ \chi_{\{\Sb_k>A\}})=0,
\qquad
\forall \  k=1,...,N-1.
$$
Since $\Sb_k$ is $\Q_{n}$ integrable,
the above would follow from
\begin{equation}
\label{3.15++}
\lim_{M\to \infty}\ \lim_{A\rightarrow\infty}\
\sup_{n \ge M}\
\mathbb E^{(n)}(\Sb_k\ \chi_{\{\Sb_k>A\}})=0,
\qquad
\forall \  k=1,...,N-1.
\end{equation}

We continue by proving \reff{3.15++}.
Fix positive integers  $k<N$ and $n$.
Set $X:=(1-\kappa) \Sb_k$,
$Y:=\mathbb E^{(n)}(\Sb_N|\cF_k)$.
In view of (\ref{3.14}),
 $\mathbb E^{(n)}((X-Y)^{+})\leq (K+1)/\sqrt n$.
Therefore, by  Cauchy-Schwarz
and the Markov inequality, we obtain
that for any $A>0$,
\begin{eqnarray*}
\mathbb E^{(n)}[X\ \chi_{\{X>A\}}]
&\leq &\mathbb E^{(n)}\left[((X-Y)^++Y) \chi_{\{X>A\}}\right]
\le \mathbb E^{(n)}\left[(X-Y)^+\right]
+\E^{(n)}\left[Y\ \chi_{\{X>A\}}\right]\\
&\leq &
\frac{K+1}{\sqrt n}+
\mathbb E^{(n)}\left[Y\ \chi_{\{X>A\}}\right]
\leq
 \frac{K+1}{\sqrt n}+\sqrt{\mathbb E^{(n)} [Y^2]}\
\sqrt{\Q_n(X>A)}\\
&\leq &
\frac{K+1}{\sqrt n}+\sqrt{\mathbb E^{(n)}[ \Sb_N^2]}\
\sqrt{\mathbb E^{(n)} [X]} \frac{1}{\sqrt{A}}\\
&\leq&
\frac{K+1}{\sqrt n}+  \frac{1}{\sqrt{A}}
\sqrt{\mathbb E^{(n)}[ \Sb_N^2]}\
\sqrt{\frac{K+1}{\sqrt n} + \mathbb E^{(n)} [Y]}\\
&\leq&
\frac{K+1}{\sqrt n}+  \frac{1}{\sqrt{A}}
\sqrt{\mathbb E^{(n)} [\Sb_N^2]}\
\sqrt{\frac{K+1}{\sqrt n} + \mathbb E^{(n)} [\Sb_N]}.
\end{eqnarray*}
This together with \reff{3.15+} yields  \reff{3.15++}
and hence, the uniform integrability of the
sequence $\Q_{n}$.

In view of the Prohorov's Theorem (see \cite{B}),
there exists a
subsequence  $\Q_{n_l}$,
$l\in\mathbb N$ which converge weakly
 to a probability measure $\tilde{\Q}$.
Then, (\ref{3.15}) implies that
\begin{eqnarray*}
\tilde{\mathbb E} G(\Sb)&=&\lim_{l\rightarrow\infty}\mathbb{E}^{(n_l)} G(\Sb)\\
&\geq&
\lim\inf_{n\rightarrow\infty}\sup_{\mathbb Q\in {\hat \cQ}_n}
\mathbb{E}_\Q\left(G(\Sb)-\sqrt n\sum_{k=0}^{N-1}
\left(|\mathbb{E}_\Q(\Sb_N|\cF_k)-\Sb_k|-\kappa \Sb_k\right)^{+}\right),
\end{eqnarray*}
where $\tilde {\mathbb E}$ denotes the expectation with
respect to $\tilde {\mathbb{Q}}$.  Then, Proposition \ref{lem3.2} and
Lemma \ref{lem3.3} imply \reff{2.2} provided
that $\tilde {\mathbb{Q}} \in \cM_{\kappa,\cP}$.

Thus, in order to complete the proof of this lemma,
 it suffices to show that for the limiting probability
measure $\tilde{\Q}$ is in $\cM_{\kappa,\cP}$.

Fix $k$ and
let $h:\mathbb{R}^k\rightarrow\mathbb{R}_{+}$
be a continuous bounded function. Denote by $\|\cdot\|$
the sup norm on $\mathbb R^k$. By \reff{3.14}, it follows that
\begin{eqnarray*}
\mathbb E^{(n)}(\Sb_N h(\Sb_1,\ldots,\Sb_k))&=&
\mathbb E^{(n)}\left(\mathbb E_n(\Sb_N|\cF_k)h(\Sb_1,\ldots,\Sb_k)\right)\\
&\leq&
\mathbb E^{(n)}\left((1+\kappa)\Sb_k h(\Sb_1,\ldots,\Sb_k)\right)+\frac{(K+1)\|h|\|}{\sqrt n}.
\end{eqnarray*}
Similarly, we conclude that
\begin{eqnarray*}
\mathbb E^{(n)}(\Sb_N h(\Sb_1,\ldots,\Sb_k))
&=&\mathbb E^{(n)}\left(\mathbb E_n(\Sb_N|\cF_k)h(\Sb_1,\ldots,\Sb_k)\right)\\
&\geq&
\mathbb E^{(n)}\left((1-\kappa)\Sb_k h(\Sb_1,\ldots,\Sb_k)\right)-\frac{(K+1)\|h\|}{\sqrt n}.
\end{eqnarray*}
We next
take the limit $n_l\rightarrow\infty$, and
use \reff{3.15+}, \reff{3.15++}.
The result is
\begin{equation*}
(1-\kappa)\tilde{\mathbb E}\left(\Sb_k h(\Sb_1,\ldots,\Sb_k)\right)\leq
\tilde{\mathbb E}\left(\Sb_N h(\Sb_1,...,\Sb_k)\right)\leq
(1+\kappa)\tilde{\mathbb E}\left(\Sb_k h(\Sb_1,\ldots,\Sb_k)\right).
\end{equation*}
The above holds for any
non-negative, continuous and bounded function $h$.
Then, by a standard density argument we arrive at
$$
(1-\kappa)\Sb_k\leq
\tilde{\mathbb E} (\Sb_N|\cF_k)\leq (1+\kappa) \Sb_k, \ \ k=0,...,N-1.
$$
Hence, $\tilde \Q$ is an $\kappa$-approximate martingale law.

We continue by showing that $\tilde \Q$ satisfies \reff{a.mu}.
From (\ref{3.15+}) it follows
that
\begin{equation}
\label{3.17}
\tilde{\mathbb{E}}
\left[ \Sb^p_N \right]<\infty.
\end{equation}
Let $g \in \mathcal{H}$ and let
$C>0$ be such that $g(x)\leq C(1+x)^p$ for all $x\geq 0$.
 There exists a sequence of continuous functions
$\{g_k\}_{k=1}^\infty\subset\mathcal{H}$ which convergence pointwise
to $g$ and satisfy $|g_k(x)|\leq C(1+x^p)$, for all $x\geq 0$ and $n\in\mathbb{N}$.
Moreover, by \reff{3.17} and the dominated convergence theorem,
$$
\tilde{\mathbb{E}}[g(\Sb_N)]=
\lim_{k\rightarrow\infty}\tilde{\mathbb{E}}[g_k(\Sb_N)].
$$
Thus, to prove $\tilde{\mathbb{E}}[g(\Sb_N)]\leq\cP(g)$,
it is sufficient to show
that $\tilde{\mathbb{E}}[g_n(\Sb_N)]\leq\cP(g_n)$ for any $n$.
Therefore,  without loss of generality
we may assume that $g$ is a continuous function. Set
$f_n:=g_{|U_n}$ and $h_n=\cL_n(f_n)$, $n\in\mathbb{N}$. Observe that
for sufficiently large $n$, $f_n\in\cW_n$.
Since $g$ is continuous,
$g(x)=\lim_{n\rightarrow\infty}f_n(x_n)$
for any $x\geq 0$ and a sequence $x_n\geq 0$, $n\in\mathbb{N}$
which converge to $x$.
Furthermore the sequence $h_n$, $n\in\mathbb{N}$ convergence pointwise to $g$.
We use the Skorohod representation theorem,
(\ref{a.cont}) and (\ref{e.pn}), to conclude that
$$
\tilde{\mathbb E}[g(\Sb_N)]=\lim_{n\rightarrow\infty}
\mathbb E^{(n)}[f_n(\Sb_N)]
\leq \lim_{n\rightarrow\infty} \cP_n(f_n)
=\lim_{n\rightarrow\infty} \cP(h_n)\leq \cP(g)
$$
as desired.

\end{proof}

\section{Hedging with Constraints and Transaction costs}
\label{sec:4}
\setcounter{equation}{0}

This section is devoted to the proof
of an auxiliary result that is used in Lemma \ref{lem3.3}.

Fix $n\in\mathbb{N}$ and
recall $\Om_n=\{\ kh\ |\ k=0,1,\ldots\ \}$
with $h=1/n$ as defined
in the subsection \ref{ss.discrete}.
In this section, we do not allow to buy vanilla options,
but only to trade the stock with
proportional transaction costs. Furthermore, the number of the
stocks that the investor is allowed
to buy should lie in the interval $[-M,M]$.
Therefore,
 in this section a portfolio strategy is a pair
 $\tilde\pi=(x,\gamma)$ where $x\in\mathbb{R}$ is the
initial capital and $\gamma:\{0,1,...,N-1\}\times \Omega_n\rightarrow\mathbb R$
is a progressively measurable map
which satisfy $|\gamma(i,S)-\gamma(i-1,S)|\leq M$ for all $i,S$.
The portfolio value for any
$\Sb \in\Omega_n$ is given by
$$
\tilde{Y}^{\tilde\pi}_N(\Sb)=
x+\sum_{i=0}^{N-1}\gamma(i,\Sb)(\Sb_{i+1}-\Sb_i)
-\kappa \sum_{i=0}^{N-1}\Sb_i\left|\gamma(i,\Sb)-\gamma(i-1,\Sb)\right|,
$$
where as before we set $\gamma(-1,S)\equiv 0$.

Consider a European option with the payoff $\hat X=F(\Sb)$ where
$F:\Omega_n\rightarrow\mathbb{R}$.
We do not make any assumptions on the function $F$.
The super--replication price is defined by
$$
\tilde{V}(F)=\inf\{x\ |\ \exists \tilde\pi=(x,\gamma) \ \mbox{such} \ \mbox{that} \
\tilde{Y}^{\tilde\pi}_N(\Sb)\geq F(\Sb),
 \ \forall \ \Sb\in\Omega_n\}.
 $$

\begin{theorem}
\label{thm4.1}
For any $F:\Omega_n\rightarrow\mathbb{R}$,
$$
\tilde{V}(F)=\sup_{\mathbb Q\in \tilde{\cQ}_n}
\mathbb{E}_\Q
\left(F(\Sb)-M\sum_{k=0}^{N-1}\left(|\mathbb{E}_\Q
(\Sb_N|\cF_k)-\Sb_k|-\kappa \Sb_k\right)^{+}\right),
$$
where $\tilde{\cQ}_n$ is the set of all
probability measures on $\Omega_n$,
which  are supported on a finite set.
\end{theorem}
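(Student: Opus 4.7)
The plan is to prove the duality in two directions: a weak-duality bound obtained by integrating the super-hedging inequality against any $\Q\in\tilde\cQ_n$, and a strong-duality bound obtained by applying Sion's minimax theorem to the natural Lagrangian and then computing its inner infimum over admissible strategies explicitly.

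\textbf{Weak direction.} For any perfect hedge $\tilde\pi=(x,\gamma)$ and $\Q\in\tilde\cQ_n$, I would take $\E_\Q$ of $F(\Sb)\le\tilde Y_N^{\tilde\pi}(\Sb)$. Applying the telescoping identity from the proof of Theorem \ref{thm2.1}, the capital-gains term rewrites as $\sum_{j=0}^{N-1}(\gamma(j,\Sb)-\gamma(j-1,\Sb))(\Sb_N-\Sb_j)$. Since $\gamma(j)-\gamma(j-1)$ is $\cF_j$-measurable and bounded by $M$, conditioning on $\cF_j$ and using $ab\le|a||b|$ gives, summand by summand,
\begin{equation*}
\E_\Q[(\gamma(j)-\gamma(j-1))(\Sb_N-\Sb_j)]-\kappa\E_\Q[\Sb_j|\gamma(j)-\gamma(j-1)|]\le M\E_\Q[(|\E_\Q[\Sb_N|\cF_j]-\Sb_j|-\kappa\Sb_j)^+].
\end{equation*}
Summing over $j$ and taking the infimum over $\tilde\pi$ yields $\tilde V(F)\ge\sup_\Q(\cdots)$.

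\textbf{Strong direction.} Set $L(\gamma,\Q):=\E_\Q[F(\Sb)-\tilde Y_N^{(0,\gamma)}(\Sb)]$. Because $\tilde\cQ_n$ contains all Dirac masses, $\sup_{\Q}L(\gamma,\Q)=\sup_{\Sb}[F-\tilde Y_N^{(0,\gamma)}]$, so $\tilde V(F)=\inf_\gamma\sup_\Q L(\gamma,\Q)$. Parameterize admissible $\gamma$ by its increments $\Delta_j:=\gamma(j)-\gamma(j-1)$, identifying the admissible set $\Gamma$ with $\prod_{j,h}[-M,M]$ (over $j<N$ and $\cF_j$-histories $h$); by Tychonoff $\Gamma$ is compact in the product topology. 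For each finite-support $\Q\in\tilde\cQ_n$, $L(\cdot,\Q)$ depends on only finitely many $\Delta_j(h)$ and is continuous and convex in the product topology, while $L(\gamma,\cdot)$ is affine in $\Q$; Sion's minimax theorem therefore yields $\tilde V(F)=\sup_\Q\inf_\gamma L(\gamma,\Q)$. For the inner infimum, rewriting $L$ via telescoping and conditioning on $\cF_j$ gives
\begin{equation*}
L(\gamma,\Q)=\E_\Q[F(\Sb)]+\sum_{j=0}^{N-1}\E_\Q[-\Delta_j(\E_\Q[\Sb_N|\cF_j]-\Sb_j)+\kappa\Sb_j|\Delta_j|].
\end{equation*}
The minimization decouples across $(j,h)$: with $y=\E_\Q[\Sb_N|\cF_j](h)-\Sb_j(h)$, $s=\Sb_j(h)$, each scalar problem $\min_{|\Delta|\le M}(-\Delta y+\kappa s|\Delta|)$ has value $-M(|y|-\kappa s)^+$, attained at $\Delta=M\,\mathrm{sign}(y)$ when $|y|>\kappa s$ and at $\Delta=0$ otherwise. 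Summing recovers exactly the RHS of the theorem.

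\textbf{Main obstacle.} The delicate point is choosing a topology in which $\Gamma$ is compact \emph{and} $L(\cdot,\Q)$ is lower semicontinuous for every $\Q$ in the dual set. The product topology does this precisely because every $\Q\in\tilde\cQ_n$ has finite support, which localizes the $\gamma$-dependence of $L$ to finitely many coordinates; on an arbitrary probability measure on the countable set $\Omega_n$ the Lagrangian would involve infinitely many coordinates of $\gamma$ and continuity in the product topology could fail. This is the structural reason the dual in Theorem \ref{thm4.1} is stated over $\tilde\cQ_n$ rather than over all probability measures on $\Omega_n$.
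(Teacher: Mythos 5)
Your proof is correct, but it takes a genuinely different route from the paper's. The paper establishes the inequality $\tilde{V}(F)\le\sup_{\Q}(\cdots)$ by truncating $\Omega_n$ to the finite set $\Omega_n^J=\{0,h,\dots,Jh\}^N$, invoking an existing duality theorem for finite multinomial markets with friction (Theorem~3.1 in~\cite{DS1}, applied with the penalty $g(\tilde s,\nu)=\kappa\tilde s|\nu|$ for $|\nu|\le M$ and $+\infty$ otherwise, whose conjugate is precisely $M(|y|-\kappa s)^+$), then passes to the limit $J\to\infty$ by a diagonal extraction of a super-hedge $\gamma$ on all of $\Omega_n$; the reverse inequality is obtained by "similar arguments" from~\cite{DS1}. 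You instead work directly on the countable $\Omega_n$ with the Lagrangian $L(\gamma,\Q)=\E_\Q[F-\tilde Y_N^{(0,\gamma)}]$, write $\tilde V(F)=\inf_\gamma\sup_\Q L$ (using that Dirac masses lie in $\tilde\cQ_n$), and apply a Sion/Fan minimax theorem: $\Gamma=\prod_{j,h}[-M,M]$ is compact by Tychonoff, $L(\cdot,\Q)$ is convex and — because $\Q$ has finite support and hence the dependence on $\gamma$ localizes to finitely many increments — continuous in the product topology, while $L(\gamma,\cdot)$ is affine. The inner infimum then decouples into the scalar problems $\min_{|\Delta|\le M}(-\Delta y+\kappa s|\Delta|)=-M(|y|-\kappa s)^+$, which coincides with the conjugate computation the paper performs through~\cite{DS1}. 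Your approach is more self-contained (it does not outsource the key duality to~\cite{DS1} nor require the truncation/diagonalization step) and it makes transparent, as you note in your "main obstacle" remark, exactly why the dual set must consist of finitely supported measures: for general $\Q$ on $\Omega_n$ the Lagrangian would involve infinitely many coordinates of $\gamma$ and lower semicontinuity on $\Gamma$ could fail. The trade-off is that you rely on a nontrivial infinite-dimensional minimax theorem where the paper relies on a finite-dimensional LP/KKT duality plus a compactness passage. Both are valid proofs of the statement.
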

\begin{proof}
We start with establishing the inequality
\begin{equation}
\label{4.1}
\tilde{V}(F) \le \sup_{\mathbb Q\in \tilde{\cQ}_n}
\mathbb{E}_\Q
\left(F(\Sb)-M\sum_{k=0}^{N-1}\left(|\mathbb{E}_\Q
(\Sb_N|\cF_k)-\Sb_k|-\kappa \Sb_k\right)^{+}\right).
\end{equation}
In fact in Lemma \ref{lem3.3} we used only
the above inequality. Without loss of generality we assume
that the right hand side of (\ref{4.1}) is finite.

For a positive integer $J\in\mathbb N$,
consider the finite set
$\Omega^J_n:=\{0,h,2 h,\ldots, Jh\}^N$
with as before $h=1/n$.
Define the minimal super--replication cost
$$
\tilde{V}^J(F)=
\inf\{x\ |\ \exists \tilde\pi=(x,\gamma) \ \mbox{such} \ \mbox{that} \
\tilde{Y}^{\tilde\pi}_N(\Sb)\geq F(\Sb),
 \ \forall \Sb\in\Omega^J_n\}.
 $$
The cost $\tilde{V}^J(G)$ is in fact
equal to
the minimal super-replication cost
in the multinomial model
which is supported on the set $\Omega^J_n$.
Thus, we are in a position to apply
Theorem 3.1 in \cite{DS1} with the penalty function
\begin{equation}\label{4.3}
g(\tilde s,\nu) =
\left\{
\begin{array}{ll}
\kappa \tilde s |\nu|,\qquad &{\mbox{if}}\ |\nu| \le {M},\\
+\infty, &{\mbox{else}}.
\end{array}
\right.
\end{equation}
The function $g$ is convex in the second variable.
Moreover, the convex dual of $g$
is given by
$$
\hat G(\tilde s,y)=\sup_{\nu\in\mathbb{R}}\nu y-g(\tilde s,\nu)
=M(|y|-\kappa s)^{+}.
$$
Therefore, Theorem 3.1 in \cite{DS1}
implies that
\begin{eqnarray}
\label{4.1+}
\tilde{V}^J(F)&=&
\sup_{\mathbb Q\in \cQ^J_n}
\mathbb{E}_\Q
\left(F(\Sb)-
M\sum_{k=0}^{N-1}
\left(|\mathbb{E}_\Q
(\Sb_N|\cF_k)-\Sb_k|-\kappa \Sb_k\right)^{+}\right)\\
&\leq&
\sup_{\mathbb Q\in \tilde{\cQ}_n}
\mathbb{E}_\Q
\left(F(\Sb)-
M\sum_{k=0}^{N-1}
\left(|\mathbb{E}_\Q
(\Sb_N|\cF_k)-\Sb_k|-\kappa \Sb_k\right)^{+}\right),\nonumber
\end{eqnarray}
where $\cQ^J_n$ is the set of all probability measures
on $\Omega^J_n$.

Now, for every
$J\in\mathbb{N}$ there exists a super-replicating
portfolio
$\tilde\pi_J=\left(\tilde{V}^J(F)+1/J,\gamma_J\right)$
for the multinomial model supported on $\Omega^J_n$. Namely,
$\gamma_J:\{0,1,...,N-1\}\times \Omega_n\rightarrow\mathbb{R}$
is a progressively measurable map
such that $|\gamma_J(i,\Sb)-\gamma_J(i-1,\Sb)|\leq M$ for any $i,\Sb$
and
$\tilde{Y}^{\tilde{\pi}_J}_N(\Sb)\geq F(\Sb)$,
for every  $\Sb\in\Omega^J_n.$
 By using standard a diagonal procedure,
 we construct
 a subsequence ${\{\gamma_{J_i}\}}_{i=1}^\infty$
 such that for any $j=0,1,...,N-1$ and $\Sb \in\Omega_n$,
 $\lim_{i\rightarrow\infty}\gamma_{J_i}(j,\Sb)$
 exists. We denote this limit by $\gamma(j,\Sb)$.
 Let $x=\lim\inf_{i\rightarrow\infty}\tilde{V}^{J_i}(F)$.
 Then, clearly $\gamma\{0,1,...,N-1\}\times \Omega_n\rightarrow\mathbb{R}$
 is a progressively measurable map
and the portfolio which is given by
$\tilde\pi=( x,\gamma)$ satisfy
 $|\gamma_J(i,\Sb)-\gamma_J(i-1,\Sb)|\leq M$ for any $i$, $\Sb$.
 Moreover,
$\tilde{Y}^{\tilde{\pi}}_N(\Sb)\geq F(\Sb)$,
for every $\Sb\in\Omega_n.$ This together with (\ref{4.1+}) yields that
 $$
 \tilde{V}(F)\leq x\leq\sup_{\mathbb Q\in \tilde{\cQ}_n}
\mathbb{E}_\Q
\left(F(\Sb)-M\sum_{k=0}^{N-1}
\left(|\mathbb{E}_\Q(\Sb_N|\cF_k)-\Sb_k|-\kappa \Sb_k\right)^{+}\right),
$$
and (\ref{4.1}) follows.

Finally,  by using similar arguments to the arguments
on page 9 in \cite{DS1}, we prove the inequality
$$
\tilde{V}(F)\geq\sup_{\mathbb Q\in \tilde{\cQ}_n}
\mathbb{E}_\Q
\left(F(\Sb)-M\sum_{k=0}^{N-1}
\left(|\mathbb{E}_\Q(\Sb_N|\cF_k)-\Sb_k|-\kappa \Sb_k\right)^{+}\right),
$$
and complete the proof.
\end{proof}

\section{Appendix}

In this appendix, we prove Theorem \ref{lem3.1}.
We proceed in several lemmas.  We first use Lemma \ref{l.vienna}
to reduce the problem to bounded claims.
Then, using a compactness
argument as in \cite{BHLP},
we further reduce it to bounded and continuous claims.

\begin{lemma}  Suppose \reff{2.2} holds for all
bounded, upper semi-continous continuous functions.
Then, it also holds
for all $G$ satisfying Assumption \ref{a.main}.
\end{lemma}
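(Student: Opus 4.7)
The plan is to reduce to the bounded upper semi-continuous case by a two-sided truncation whose error is controlled by the quadratic option $\alpha_M$ from Lemma \ref{l.vienna}. For each $M>0$, set $c_M := L(1+M^2)$ and define
$$ G_M(\om) := \max\bigl\{-c_M,\; \min\{G(\om),\, c_M\}\bigr\}. $$
Since $\min$ and $\max$ with a constant preserve upper semi-continuity, $G_M$ is bounded and upper semi-continuous, so the hypothesis applies to it. Using Assumption \ref{a.main}, a routine case analysis (on whether $|G(\om)|\le c_M$, $G(\om)>c_M$, or $G(\om)<-c_M$) shows that in each case the difference $|G-G_M|$ vanishes unless $\|\om\|>M$, and is then bounded by $L(\|\om\|^2-M^2)$, giving the pointwise estimate
$$ |G(\om)-G_M(\om)| \le L\,\alpha_M(\om), \qquad \forall\, \om\in\Om. $$

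Next I would invoke the sub-additivity of $V$ (Lemma \ref{l.add}) together with its monotonicity---$V(f)\le V(g)$ when $f\le g$, which is immediate since any perfect portfolio for $g$ is perfect for $f$---and the hypothesis applied to the bounded upper semi-continuous $G_M$. Combined with the inequality $G_M \le G + L\alpha_M$, this yields
\begin{align*}
V(G) &\le V(G_M) + V(G-G_M) \;\le\; V(G_M) + L\,V(\alpha_M)\\
&\le \sup_{\Q\in\cM_{\kappa,\cP}} \E_\Q[G_M(\Sb)] + L\,V(\alpha_M)\\
&\le \sup_{\Q\in\cM_{\kappa,\cP}} \E_\Q[G(\Sb)] + L\,\sup_{\Q\in\cM_{\kappa,\cP}} \E_\Q[\alpha_M(\Sb)] + L\,V(\alpha_M).
\end{align*}

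Finally, I would let $M\to\infty$. Lemma \ref{l.vienna} tells us that both $\sup_\Q \E_\Q[\alpha_M]$ and $V(\alpha_M)$ have non-positive $\limsup$, so the two error terms vanish in the limit and (\ref{2.2}) follows. In the corner case where $\cM_{\kappa,\cP}=\emptyset$, the hypothesis forces $V(G_M)=-\infty$ while $V(L\alpha_M)$ is finite for all sufficiently large $M$ by Lemma \ref{l.vienna}, so the chain of inequalities still yields $V(G)=-\infty$, consistent with the convention that the supremum over the empty set is $-\infty$. The only genuinely delicate point is verifying that the two-sided truncation preserves upper semi-continuity and produces an error dominated by $\alpha_M$; the rest is bookkeeping, since the key analytic input is already packaged in Lemma \ref{l.vienna}.
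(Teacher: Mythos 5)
Your proof is correct and follows essentially the same scheme as the paper: truncate $G$ to a bounded upper semi-continuous claim $G_M$, bound the truncation error by a multiple of $\alpha_M$, and pass to the limit $M\to\infty$ via Lemma \ref{l.vienna} together with the sub-additivity (and implicit monotonicity) of $V$. The only difference is cosmetic---the paper truncates multiplicatively with a smooth cutoff $\varphi_M$, obtaining $|G-G_M|\le 2L\alpha_M$, whereas you clip $G$ to $[-c_M,c_M]$ and get the slightly cleaner $|G-G_M|\le L\alpha_M$; both work equally well for this lemma.
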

\begin{proof}
Suppose that $G$ satisfies Assumption \ref{a.main}.
Let $\varphi$ be any smooth function satisfying
$$
0 \le \varphi \le 1, \quad
\varphi(\Sb) =1, \ \ \forall \ \| \Sb \| \le 1,\quad
\varphi(\Sb) =0, \ \ \forall \ \| \Sb \| \ge 2.
$$
For a constant $M>1$, set
$$
\varphi_M(\Sb):= \varphi(\Sb/M),\qquad
G_M := G \varphi_M.
$$
$G_M$ is bounded and upper semi-continuous.
Then, by the hypothesis, the inequality \reff{2.2}
and the duality formula holds for $G_M$.
In view of Assumption \reff{a.main},
$$
\left|G(\Sb) - G_M(\Sb)\right| \le  L(1+\|\Sb\|^2)
\chi_{\{ \| \Sb \| \ge M\}}.
$$
Let $\alpha_M$ be as in
Lemma \ref{l.vienna}.  Then, for all sufficiently
large $M$,
$$
\left|G(\Sb) - G_M(\Sb)\right| \le 2L \alpha_{M}(\Sb).
$$
Since $G_M$ satisfies \reff{2.2},
$$
V(G_M) \le \sup_{\mathbb Q\in \mathcal{M}_{\kappa,\cP} }\mathbb{E}_{\mathbb Q}
\left[G_M(\Sb)\right]
\le \sup_{\mathbb Q\in \mathcal{M}_{\kappa,\cP} }\mathbb{E}_{\mathbb Q}
\left[G(\Sb)\right] + 2L \sup_{\mathbb Q\in \mathcal{M}_{\kappa,\cP} }\mathbb{E}_{\mathbb Q}
\left[\alpha_M(\Sb)\right]  .
$$
By the subadditivity of
the minimal super-replication cost $V$,
$$
V(G) \le V(G_M) + 2L \ V\left(\alpha_{M}\right).
$$
Combining the above inequalities
and Lemma \ref{l.vienna}, we arrive at
\begin{eqnarray*}
V(G) &\le & \liminf_{M\to \infty}\ \left[
 V(G_M) + 2L\ V\left(\alpha_{M}\right)\right]\\
 &\le&\sup_{\mathbb Q\in \mathcal{M}_{\kappa,\cP} }\mathbb{E}_{\mathbb Q}
\left[G(\Sb)\right] +  2L \  \liminf_{M\to \infty}
\left[  V\left(\alpha_{M}\right) + \sup_{\mathbb Q\in \mathcal{M}_{\kappa,\cP} }\mathbb{E}_{\mathbb Q}
\left[\alpha_M(\Sb)\right] \right] \\
 &\leq&\sup_{\mathbb Q\in \mathcal{M}_{\kappa,\cP} }\mathbb{E}_{\mathbb Q}
\left[G(\Sb)\right] .
\end{eqnarray*}

\end{proof}

The above proof also yields the following equivalence.

\begin{lemma}  Suppose \reff{2.2} holds for all, non-negative,
bounded, uniformly  continuous functions.
Then, it also holds
for all $G$ that are bounded
and continuous.
\end{lemma}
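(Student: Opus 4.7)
The plan is to reduce a bounded continuous claim $G$ to the non-negative, uniformly continuous setting through a constant shift followed by a compact-support truncation, then to control the tail via Lemma \ref{l.vienna}.

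First, I would reduce to the non-negative case. Let $K:=\|G\|_\infty$. Using $\cP(a)=a$ for constants (see \reff{a.cost}) together with the sub-additivity and positive homogeneity of $V$ recorded in Lemma \ref{l.add}, one checks that $V(c)=c$ for every constant $c\in\R$, and hence $V(G+K)=V(G)+K$. Since the right-hand side of \reff{2.2} also shifts by $K$, it suffices to prove \reff{2.2} for non-negative, bounded, continuous claims.

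Next, I would approximate such a $G$ by a compactly supported function. Using the smooth cutoff $\varphi_M$ from the proof of the preceding lemma, set $G_M := G\,\varphi_M$. Then $G_M$ is non-negative, bounded, continuous, and supported in $\{\|\om\|\le 2M\}$. Because any continuous function with compact support on $\Om$ is automatically uniformly continuous on $\Om$, the function $G_M$ falls within the hypothesis of the lemma, so \reff{2.2} holds for $G_M$.

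Finally, I would pass to the limit. Since $G\ge 0$ and $\varphi_M\equiv 1$ on $\{\|\om\|\le M\}$,
\[0 \le G - G_M \le K\,\chi_{\{\|\Sb\|\ge M\}} \le \frac{K}{M^2}\,\alpha_M(\Sb).\]
Sub-additivity, positive homogeneity and monotonicity of $V$ then give $V(G)\le V(G_M)+(K/M^2)V(\alpha_M)$. Combining with the hypothesis applied to $G_M$ and the trivial bound $\E_\Q[G_M]\le \E_\Q[G]$ yields
\[V(G) \le \sup_{\Q\in\cM_{\kappa,\cP}} \E_\Q[G] + \frac{K}{M^2}\, V(\alpha_M),\]
and Lemma \ref{l.vienna} eliminates the error term as $M\to\infty$. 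The main technical point is that $\Om$ is non-compact, so bounded continuous functions need not be uniformly continuous on $\Om$; the compact-support truncation circumvents this, and the quadratic tail bound combined with Lemma \ref{l.vienna} guarantees the truncation error is negligible. No deeper difficulty is anticipated.
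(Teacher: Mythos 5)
Your proof is correct and follows essentially the same route as the paper: shift by a constant to obtain non-negativity, multiply by the cutoff $\varphi_M$ to obtain a compactly supported (hence uniformly continuous) approximant, and control the truncation error via Lemma \ref{l.vienna} exactly as in the preceding lemma. The paper compresses the truncation step into the remark that one should ``proceed exactly as in the previous lemma,'' but the substance you spell out is identical.
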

\begin{proof}
Let $G$ be a bounded continuous function.
By adding $G$ an appropriate constant,
we may assume that it is nonnegative as well.
Given an integer $N$,
define $G_N$ as before.  Since $G_N$ is
compactly supported and continuous,
it is also uniformly continuous.
We then proceed exactly as in the previous
lemma to conclude the proof.
\end{proof}

We need the following elementary result.
\begin{lemma}
Let $G$ be bounded and upper semicontinuous.
Then, there exists a sequence of
uniformly
bounded, continuous functions $G_n:\R_+^d \to \R$, so that
$G_n \ge G$ and
\begin{equation}\label{app.1}
\lim\sup_{n\rightarrow\infty}G_n(x_n)\leq G(x),
\end{equation}
for every $x\in\mathbb{R}^d_{+}$ and every
sequence  $\{x_n\}_{n=1}^\infty\subset \mathbb{R}^d_{+}$
with $\lim_{n\rightarrow\infty} x_n=x$.
\end{lemma}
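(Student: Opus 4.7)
\textbf{The approach.} I would use the classical sup-convolution (inf-convolution applied to $-G$) regularization. Let $C := \sup_{x \in \R_+^d} |G(x)| < \infty$, fix any norm $\|\cdot\|$ on $\R^d$, and define
\begin{equation*}
G_n(x) := \sup_{y \in \R_+^d} \bigl\{ G(y) - n \|x - y\| \bigr\}, \qquad x \in \R_+^d,\ n \in \N.
\end{equation*}
Taking $y=x$ in the supremum shows $G_n \ge G$, and since $G(y) - n\|x-y\| \le G(y) \le C$ we also have $G_n \le C$; together with $G_n \ge G \ge -C$ this gives the required uniform bound $|G_n| \le C$. Continuity of $G_n$ is immediate because it is $n$-Lipschitz: for any $x_1, x_2 \in \R_+^d$ and any $y$,
\begin{equation*}
G(y) - n\|x_1 - y\| \le G(y) - n\|x_2 - y\| + n \|x_1 - x_2\|,
\end{equation*}
so taking the supremum over $y$ yields $G_n(x_1) - G_n(x_2) \le n\|x_1-x_2\|$, and the reverse inequality follows by symmetry.

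\textbf{Key step: the $\limsup$ inequality.} Fix $x \in \R_+^d$ and a sequence $x_n \to x$. For each $n$ pick a near-maximizer $y_n \in \R_+^d$ with
\begin{equation*}
G_n(x_n) \le G(y_n) - n \|x_n - y_n\| + \tfrac{1}{n}.
\end{equation*}
Since $G_n(x_n) \ge G(x_n) \ge -C$, rearranging gives $n\|x_n - y_n\| \le G(y_n) + C + \tfrac{1}{n} \le 2C + 1$. Hence $\|x_n - y_n\| \le (2C+1)/n \to 0$, and combined with $x_n \to x$ this forces $y_n \to x$. Upper semicontinuity of $G$ then yields
\begin{equation*}
\limsup_{n \to \infty} G_n(x_n) \le \limsup_{n \to \infty} \bigl( G(y_n) + \tfrac{1}{n} \bigr) \le G(x),
\end{equation*}
which is \eqref{app.1}.

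\textbf{Where the difficulty lies.} Continuity, uniform boundedness, and the inequality $G_n \ge G$ are essentially formal. The only real content is the $\limsup$ step, and the crux there is the a priori estimate $n\|x_n - y_n\| = O(1)$ obtained by comparing the near-maximum with the trivial lower bound $G_n(x_n) \ge G(x_n) \ge -C$; this forces the near-maximizers to concentrate at $x$ as $n \to \infty$, after which the upper semicontinuity of $G$ does the rest. The only mild subtlety is keeping the suprema over $\R_+^d$ rather than $\R^d$, but since $y = x \in \R_+^d$ is always admissible and no other property of the ambient space was used, the argument goes through unchanged.
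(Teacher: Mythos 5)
Your proof is correct, and it takes a genuinely different route from the paper. The paper constructs $G_n$ explicitly: it first sets $G_n$ on the dyadic grid $O_n$ to be the supremum of $G$ over balls of radius $2/n$ around grid points, and then extends $G_n$ to all of $\R_+^d$ by piecewise-linear interpolation over a simplicial decomposition of each grid cube. Verifying continuity requires checking that the interpolated values agree across shared simplex faces, and verifying $G_n \ge G$ and the $\limsup$ inequality requires comparing a point to the nearby grid vertices. Your approach, by contrast, is the classical sup-convolution $G_n(x) = \sup_y \{G(y) - n\|x-y\|\}$: continuity (indeed $n$-Lipschitz regularity) and $G_n \ge G$ fall out immediately, and the $\limsup$ step reduces to the clean a priori bound $n\|x_n - y_n\| \le 2C + 1$ on near-maximizers, which concentrates them at $x$ so that upper semicontinuity can take over. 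The sup-convolution argument is shorter and more conceptual, and it makes the role of upper semicontinuity transparent; the paper's construction is heavier but entirely elementary, avoiding any extremization step, and it produces a piecewise-linear approximant, which is occasionally useful if one cares about the explicit form. Both are complete proofs of the lemma. One small remark: your $G_n$ is $n$-Lipschitz and hence the sequence is not equi-Lipschitz, but the lemma only demands uniform boundedness and continuity, so this is harmless.
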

\begin{proof}
For a positive integer $n$,
consider the grid
$O_n=\left\{\left(\frac{k_1}{n},...,\frac{k_d}{n}\right),
k_1,...,k_d\in \mathbb{Z}_{+}\right\}$. Define the function $G_n:O_n\rightarrow\mathbb{R}_{+}$
by
$$
G_n(x)=\sup_{\{u\in\mathbb{R}^d_{+}\ |\ \|u-x\|\leq \frac{2}{n}\}} G(u), \qquad x\in O_n.
$$
Next, we extend $G_n$ to the domain $\mathbb{R}^d_{+}$.

For any $k_1,...,k_d\in\mathbb{Z}_{+}$ and
a permutation $\sigma:\{1,...,d\}\rightarrow\{1,...,d\}$
consider the $d$--simplex
$$
U^\sigma_{k_1,...,k_d}=\left\{(x_1,...,x_d): \frac{k_i}{n}\leq x_i\leq \frac{k_i+1}{n}, \ i=1,...,d \right\}
\bigcap \left\{(x_1,...,x_d): x_{\sigma(i)}\leq x_{\sigma(j)}, \forall i<j\right\}.$$
Fix a simplex $U^\sigma_{k_1,...,k_d}$.
Any $u\in U^\sigma_{k_1,...,k_d}$ can be represented uniquely as
a convex combination of the simplex vertices $u_1,...,u_{d+1}$ (which belong to $O_n$).
Thus define a continuous function
$G^{n,\sigma}_{k_1,...,k_d}:U^\sigma_{k_1,...,k_d}\rightarrow\mathbb{R}$
by $G^{n,\sigma}_{k_1,...,k_d}(u)=\sum_{i=1}^{d+1}\lambda_i G_n(u_i)$
where $\lambda_1,...,\lambda_{d+1}\in [0,1]$ with $\sum_{i=1}^{d+1}\lambda_i=1$
and $\sum_{i=1}^{d+1}\lambda_i u_i=u$,
are uniquely determined.

Any element $u\in\mathbb{R}^d_{+}$ belongs to at least one simplex of the above form.
Observe that if $u$ belongs to two simplexes, say
$U^\sigma_{k_1,...,k_d}$ and $U^{\sigma'}_{k'_1,...,k'_d}$ then
$G^{n,\sigma}_{k_1,...,k_d}(u)=G^{n,\sigma'}_{k'_1,...,k'_d}(u).$
Thus we can extend the function $G_n:O_n\rightarrow \mathbb{R}$ to a function
$G_n:\mathbb{R}^d_{+}\rightarrow\mathbb{R}$ by setting
$G_n(u)=G^{n,\sigma}_{k_1,...,k_d}(u)$ for
$u\in U^\sigma_{k_1,...,k_d}$, where $k_1,...,k_d\in\mathbb{Z}_{+}$
and $\sigma:\{1,...,d\}\rightarrow\{1,...,d\}$ is a permutation.

This sequence has the desired properties.

\end{proof}

The following result
completes the proof of
theorem \ref{lem3.1}

\begin{lemma}  Suppose \reff{2.2} holds for all
bounded,  continuous functions.
Then, it also holds
for all bounded, upper semi-continuous $G$.
\end{lemma}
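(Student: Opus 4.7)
The plan is to approximate the bounded upper semi-continuous claim $G$ from above by the sequence $G_n$ of bounded continuous functions constructed in the preceding lemma, invoke the duality (\ref{2.2}) for each $G_n$, and then pass to the limit on both sides. If $\cM_{\kappa,\cP}$ is empty, then by the hypothesis $V(G_n)=-\infty$ for every $n$; since $G_n\ge G$ and any perfect hedge of $G_n$ also super-replicates $G$, monotonicity of $V$ gives $V(G)\le V(G_n)=-\infty$, so (\ref{2.2}) is trivial. In what follows I therefore assume $\cM_{\kappa,\cP}\ne\emptyset$.

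By the hypothesis, $V(G_n)=\sup_{\Q\in\cM_{\kappa,\cP}}\E_\Q[G_n(\Sb)]$, so for each $n$ I pick $\Q_n\in\cM_{\kappa,\cP}$ with $\E_{\Q_n}[G_n(\Sb)]\ge V(G_n)-1/n\ge V(G)-1/n$. The next task is tightness of $\{\Q_n\}$. From (\ref{a.mu}) applied to $f(x)=x^p$ I obtain $\E_{\Q_n}[\Sb_N^p]\le\cP(x^p)<\infty$, uniformly in $n$. Combining (\ref{2.2-}) with Jensen applied to the $\cF_k$-conditional expectation $\tilde\Sb_k$ then gives the uniform bound
$$
\E_{\Q_n}[\Sb_k^p]\le\frac{1}{(1-\kappa)^p}\E_{\Q_n}[\tilde\Sb_k^p]\le\frac{\cP(x^p)}{(1-\kappa)^p}, \qquad k=0,1,\ldots,N.
$$
Since $p>2$, this yields both tightness of the family $\{\Q_n\}$ and uniform integrability of each coordinate $\Sb_k$ under $\Q_n$.

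By Prohorov's theorem I extract a subsequence converging weakly to some $\tilde\Q$, and I claim $\tilde\Q\in\cM_{\kappa,\cP}$. Exactly as in the closing paragraphs of the proof of Lemma \ref{lem3.4}, weak convergence together with the uniform integrability of $\Sb_k$ allows me to pass to the limit in $\E_{\Q_n}[\Sb_N h(\Sb_1,\ldots,\Sb_k)]$ and in $\E_{\Q_n}[\Sb_k h(\Sb_1,\ldots,\Sb_k)]$ for every continuous bounded $h$, yielding the approximate-martingale inequality (\ref{2.2-}) under $\tilde\Q$; the constraint (\ref{a.mu}) for $\tilde\Q$ is then obtained by the same pointwise approximation and dominated-convergence argument used at the very end of Lemma \ref{lem3.4}, the uniform $p$-th moment bound taking care of the passage to the limit inside the expectation.

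The final and most delicate step is to control $\E_{\Q_n}[G_n(\Sb)]$ in the limit, because both the integrand and the measure move with $n$. Here property (\ref{app.1}) is essential: by the Skorohod representation theorem I realise the convergent subsequence on a common probability space with $\Sb^{(n)}\to\Sb^{(\infty)}$ almost surely, and (\ref{app.1}) yields $\limsup_n G_n(\Sb^{(n)})\le G(\Sb^{(\infty)})$ a.s. The uniform boundedness of $\{G_n\}$ and the reverse Fatou lemma then give
$$
V(G)\le\liminf_{n\to\infty}\E_{\Q_n}[G_n(\Sb)]\le\limsup_{n\to\infty}\E_{\Q_n}[G_n(\Sb)]\le\E_{\tilde\Q}[G(\Sb)]\le\sup_{\Q\in\cM_{\kappa,\cP}}\E_\Q[G(\Sb)],
$$
which is (\ref{2.2}). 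The main obstacle is precisely this joint limit; the preceding lemma was tailored to furnish the one-sided convergence property (\ref{app.1}) needed to overcome it.
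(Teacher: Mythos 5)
Your proof is correct and follows essentially the same strategy as the paper's: approximate $G$ from above by the bounded continuous $G_n$ from the preceding lemma, pick near-optimal $\Q_n\in\cM_{\kappa,\cP}$, extract a weakly convergent subsequence, and use the Skorohod representation together with property (\ref{app.1}) and reverse Fatou to pass to the limit. The two small places where you add value are worth noting: you explicitly dispatch the case $\cM_{\kappa,\cP}=\emptyset$ (which the paper leaves implicit), and you give a cleaner tightness argument — since the $\Q_n$ here satisfy (\ref{2.2-}) exactly rather than approximately, you can use (\ref{a.mu}) with $f(x)=x^p$ plus conditional Jensen to bound $\sup_{n,k}\E_{\Q_n}[\Sb_k^p]$ directly, which is simpler than the Cauchy–Schwarz/Markov chain in Lemma \ref{lem3.4} that the paper merely points to. Both of these are legitimate and even slightly sharper, but the overall route is the same as the paper's.
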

\begin{proof}
Let $G$ be bounded and upper semi-continuous.
Let $G_n$ be the sequence of bounded,
continuous functions constructed in the
previous lemma.
Hence \reff{2.2} and  Theorem
\ref{thm2.1} holds for $G_n$.

Using Theorem
\ref{thm2.1},
we choose
a sequence of  probability measures
$\mathbb{Q}_n\in\mathcal{M}_{\kappa,\cP}$
satisfying,
\begin{equation}
\label{app.2}
\mathbb{E}^{(n)} [G_n(\mathbb S)]>V(G_n)-\frac{1}{n}.
\end{equation}
Using similar compactness arguments
as in Lemma \ref{lem3.4},
we construct a
subsequence  $\mathbb Q_{n_l}$,
$l\in\mathbb N$ which converge weakly
 to a probability measure $\tilde{\mathbb Q}\in\mathcal{M}_{\kappa,\cP}$.
Recall that $G_n$'s are uniformly bounded.
Thus, by \reff{app.1} and the Skorohod representation theorem,
$$
\lim\sup_{l\rightarrow\infty}
\mathbb{E}^{(n_l)} [G_{n_l} (\mathbb S)]
\leq \tilde{\mathbb{E}} [G(\mathbb S)].
$$
This together with
\reff{app.2} yields that
$$
V(G)\leq \lim\inf_{n\rightarrow\infty}V(G_n)\leq
\lim\inf_{n\rightarrow\infty}\mathbb{E}^{(n)} [G_n(\mathbb S)]
\leq
\tilde{\mathbb{E}}[ G(\mathbb S)]\leq
\sup_{\mathbb Q\in \mathcal{M}_{\kappa,\cP} }\mathbb{E}_{\mathbb Q}
\left[G(\Sb)\right].
$$
This completes the proof.
\end{proof} 

\bibliographystyle{spbasic}

\end{document}